\documentclass[10pt,twocolumn,twoside]{IEEEtran}

\newcommand{\E}{{\rm I\!E}}

\usepackage{amsmath}
\usepackage{amssymb}
\usepackage{amsfonts}

\usepackage{amsthm}
\usepackage{overpic}
\usepackage{lipsum}
%

  {
      \theoremstyle{plain}
      \newtheorem{assumption}{Assumption}
  }

\hyphenation{op-tical net-works semi-conduc-tor}

\usepackage[utf8]{inputenc}
\usepackage[T1]{fontenc}
\usepackage{url}
\usepackage{ifthen}
\usepackage{cite}
\usepackage{graphicx,epstopdf,amssymb,amsthm}
\usepackage{ulem,color}
\usepackage{epstopdf}
\usepackage[utf8]{inputenc}
\usepackage[english]{babel}
\usepackage{caption}

\usepackage{enumitem}

\DeclareCaptionLabelFormat{lc}{\MakeLowercase{#1}~#2}
\captionsetup{labelfont=sc,labelformat=lc}

\newtheorem{theorem}{Theorem}
\newtheorem{corollary}{Corollary}
\newtheorem{remark}{Remark}
\newtheorem{lemma}{Lemma}

\newtheorem{definition}{Definition}
\newtheorem{proposition}{Proposition}

\newcommand{\phil}[1]{\textcolor{blue}{#1}}

\begin{document}
\graphicspath{{figs/}}

%
\title{Estimation and Distributed Eradication of SIR Epidemics on Networks}
%
%

%

\author{Ciyuan Zhang, Humphrey Leung, Brooks Butler, 
        and~Philip. E. Par\'e*
\thanks{*Ciyuan Zhang, Humphrey Leung, Brooks Butler,
 and~Philip. E. Par\'e are with the School of Electrical and Computer Engineering at Purdue University. Emails: \{zhan3375, leung61, brooksbutler, philpare\}@purdue.edu.}\thanks{*This work was funded in part by the C3.ai Digital Transformation Institute sponsored by C3.ai Inc. and the Microsoft Corporation and in part 
   by the National Science Foundation, grants NSF-CNS \#2028738 and NSF-ECCS \#2032258.}}

\maketitle

\begin{abstract}
This work examines the discrete-time networked SIR (susceptible-infected-recovered) epidemic model, where the infection and recovery parameters may be time-varying. We provide a sufficient condition for the SIR model to converge to the set of healthy states exponentially. We propose a stochastic framework to estimate the system states from observed testing data and provide an analytic expression for the error of the estimation algorithm. Employing the estimated and the true system states, we provide two novel eradication strategies that guarantee at least exponential convergence to the set of healthy states. We illustrate the results via simulations over northern Indiana, USA.
\end{abstract}

\IEEEpeerreviewmaketitle

\section{Introduction}

As of February 2021, the COVID-19 virus has claimed 2.4 million lives and infected 110 million individuals worldwide~\cite{whoCoronavirus}. Lack of effective treatments, high contagion rates~\cite{mohapatra2020recent}, long incubation periods~\cite{backer2020incubation, guan2020clinical, li2020early, lauer2020incubation}, and asymptomatic cases~\cite{byambasuren2020estimating, chang2020time, mizumoto2020estimating,ing2020covid} pose significant challenges in containing and eradicating pandemics. Recent pandemics, including gonorrhea~\cite{lajmanovich1976deterministic}, Ebola~\cite{dike2017susceptible}, and COVID-19~\cite{calafiore2020modified}, have accelerated the development of infection models. The main goal of epidemic model development is to identify conditions to eradicate the pathogen, and leverage the knowledge of these conditions to design mitigation strategies. Various infection models have been proposed, based on characteristics of individual pathogens, and studied in the literature, including susceptible-infected-susceptible (SIS), susceptible-infected-removed (SIR), and susceptible-infected-removed-susceptible (SIRS)~\cite{rock2014dynamics, mei2017dynamics}. 
In this paper, we focus on the SIR epidemic model. We aim to expand on the SIR model, by exploring mutating viruses over networks, estimation of the underlying states, and distributed eradication strategies.

The patchwork response to COVID-19~\cite{haffajee2020thinking} gives rise to susceptible community subpopulations, with heterogeneous time-varying factors not previously explored by the SIR model. Extensions on the SIS model, studied in~\cite{van2009virus, ahn2013global, liu2019networked}, augment the compartmental epidemic models originated in~\cite{bailey1975mathematical} to include interactions between subpopulations of susceptible communities.  Additionally, various advanced epidemic models consider 
time-varying factors~\cite{
pascual2005seasonal, gracy2019mutating, pare2015stability, pare2018epidemic, bokharaie2010spread, gracy2020asymptotic, prakash2010virus, liu2016threshold}. In this paper, we establish sufficient conditions for the set of healthy states of a networked time-varying SIR model to be globally exponentially stable (GES). These equilibrium states are not unique, as the final susceptible and removed states are dependent on the initial conditions and the time-varying infectious and healing parameters.

The delay in onset of COVID-19 symptoms~\cite{backer2020incubation, guan2020clinical, li2020early, lauer2020incubation}, large asymptomatic populations estimated between $17-81\%$~\cite{byambasuren2020estimating, chang2020time, mizumoto2020estimating, ing2020covid}, and delay in test results~\cite{bergquist2020covid} compromise the ability for accurate estimation of current infection states.
An estimation algorithm that incorporated a constant delay between the change in infection proportion and testing data was introduced in \cite{hota2020closed}. They studied the inference problem by using a Bayesian approach. Inspired by the delay characterization suggested in \cite{hota2020closed}, we propose a stochastic delay to model the unpredictability of the COVID-19 virus and testing strategies. We have developed methods for estimating the underlying epidemic states from testing data with a delay sampled from a geometric distribution, which cannot be completely filtered by the method suggested in \cite{hota2020closed}. The geometric delay model accounts for the stochastic effect of individuals failing to get tested immediately after exposure. We study the aggregated effect of each individual delay on the trajectory of confirmed cases and devise a method for estimating the underlying epidemic states of an SIR model from these delayed measurements. We also investigate the proposed method's estimation error, which provides insights for achieving an accurate estimation of the system states. We then employ this more realistic estimation to strategically eradicate a disease.

As proven in~\cite{hota2020closed}, the SIR epidemic model converges to a healthy state, however, an exponential convergence is not shown. Combining the modeling and inference approach allows us to develop two distributed control strategies capable of eradicating epidemic spread exponentially, at an equilibrium with a higher proportion of susceptible population. Decreasing the removed (recovered) and increasing the susceptibility proportion is of exceptional importance for the COVID-19 pandemic, as long term and severe health complications have been documented in the recovered populations, including impaired cognition~\cite{liotta2020frequent} and damage to cardiac tissue~\cite{mitrani2020covid}. 
Our main result shows that by applying the estimated and the true susceptible states of each node, our proposed eradication strategies will guarantee global exponential stability of a healthy state of the overall network. 

\subsection{Paper Contributions}
We summarize the main contributions of this paper as follows:
\begin{itemize}
    \item We establish sufficient conditions for global exponential stability of the set of healthy states; see Theorem~\ref{thm:GES}. 
    \item We propose a stochastic framework which estimates the trajectories of the system states of the networked SIR model from testing data. Furthermore, we provide analytical expressions for the error of the estimation algorithm we propose; see Prop.~\ref{prop:estimation_error}.
    \item We propose two distributed eradication strategies for adjusting healing rates, one that is based on the true system states and the other one is based on the inferred system states. Both methods guarantee that the virus is eradicated within exponential time; see Theorem~\ref{thm:control_datadriven} and Corollary~\ref{coro:control_inferred}. 
\end{itemize}

\subsection{Paper Outline}
We organize this paper as follows: 
Section \ref{sec:model} lays down some basic assumptions and restates the well-known SIR model in the networked fashion, 
and it presents the main problems studied in this paper. 
Section \ref{sec:stability} first recalls preliminary results that are essential for stability analysis and then discusses the sufficient conditions for global exponential stability of a healthy state of the networked time-varying SIR models.
Section~\ref{sec:inference} covers the proposed techniques of estimating hidden epidemics states with the stochastic delay of tested individuals and testing data.
Section \ref{sec:eradication_strategies} covers the two distributed control strategies which ensure that the system converges to a healthy state in at least exponential time.
Section~\ref{sec:simulation} illustrates the results from Section \ref{sec:inference} and \ref{sec:eradication_strategies} with numerical simulations. Finally, in Section~\ref{sec:conclusion}, we summarize the main conclusions of this paper and discuss potential future directions. 


\subsection{Notation}

We denote the set of real numbers, the non-negative integers, and the positive integers as $\mathbb{R}$, $\mathbb{Z}_{\geq 0}$, and $\mathbb{Z}_{\geq 1}$, respectively. For any positive integer $n$, we have $[n] =\{1,2,...,n \}$. The spectral radius of a matrix $A \in \mathbb{R}^{n\times n}$ is $ \rho(A)$. A diagonal matrix is denoted as diag$(\cdot)$. The transpose of a vector $x\in \mathbb{R}^n$ is $x^\top$. The Euclidean norm is denoted by $\lVert \cdot \rVert$. We use $I$ to denote identity matrix. We use $\mathbf{0}$ and $\mathbf{1}$ to
denote the vectors whose entries all equal 0 and 1, respectively. The dimensions of the vectors are determined by context. Given a matrix $A$, $A \succ 0$ (resp. $A \succeq 0$) indicates that $A$ is positive definite (resp. positive semidefinite), whereas $A \prec 0$ (resp. $A \preceq 0$) indicates that A is negative definite (resp. negative semidefinite). Let $G =(\mathbb{V},\mathbb{E})$ denote a graph or network where $\mathbb{V} = \{ v_1, v_2,..., v_n\}$ is the set of subpopulations, and $\mathbb{E} \subseteq \mathbb{V}\times \mathbb{V} $ is the set of edges. We denote the expectation of a random variable as $ \E[\cdot]$.


\section{Model and Problem Formulation}\label{sec:model}
Consider a time-varying epidemic network of $n$ subpopulations, where the size of subpopulation $v_i$ is $N_i \in \mathbb{Z}_{>0}$, and 
the infection rates and healing rates could be time-varying. We denote $\beta_{ij}(t)$ as the infection rate from node $v_j $ to node~$v_i$ at time $t$, we denote $\gamma_i(t)$ as the healing rate of node~$v_i$ at time $t$. The proportions of the subpopulation at node~$v_i$ 
which are susceptible, infected, and recovered at time $t$ are denoted by $s_i(t),x_i(t)$ and $r_i(t)$, respectively. The deterministic continuous-time evolution of the SIR epidemic is given by
\begin{subequations}
\IEEEyesnumber\label{eq:ct_SIR} 
\begin{align}
  \dot{s}_i(t)&= -s_i(t) [\sum_{j=1}^n \beta_{ij}(t)x_j(t)], \label{eq:ct_SIRsub1}\\
 \dot{x}_i(t) &= s_i(t)[\sum_{j=1}^n \beta_{ij}(t)x_j(t)] -\gamma_i(t) x_i(t),  \label{eq:ct_SIRsub2}
\\
\dot{r}_i(t) &= \gamma_i(t) x_i(t), \;\ \forall i \in [n].
\label{eq:ct_SIRsub3}
\end{align}
\end{subequations}
We now state the discrete-time SIR epidemic dynamics obtained through Euler discretization of~\eqref{eq:ct_SIR}. For a small sampling time $h>0$, the discrete-time evolution of the SIR epidemic is given by
\begin{subequations}
\IEEEyesnumber\label{eq:dt_SIR} 
\begin{align}
  s_i(k+1) &= s_i(k) +h[-s_i(k)\sum_{j=1}^n \beta_{ij}(k) x_j(k)], \label{eq:dt_SIRsub1}\\
 x_i(k+1) &= x_i(k) +  h[s_i(k)\sum_{j=1}^n \beta_{ij}(k) x_j(k)-\gamma_i(k)x_i(k)],  \label{eq:dt_SIRsub2}
\\
 r_i(k+1) &= r_i(k) +h\gamma_i(k) x_i(k). \label{eq:dt_SIRsub3}
\end{align}
\end{subequations}


Equation \eqref{eq:dt_SIRsub2} can be rewritten as 
\begin{equation}\label{eq:SIR_dynamic}
    x(k+1) =x(k) +h [S(k)B(k) - \Gamma(k)]x(k),
\end{equation}
where $S(k) = \text{diag}(s(k))$, $B(k)$ is the matrix with $(i,j)$th entry $\beta_{ij}(k)$, and $\Gamma(k) = \text{diag}(\gamma_i(k))$. The spread of a virus over a network can be captured using a graph $G= (\mathbb{V},\mathbb{E})$, 
where $\mathbb{E} = \{ (v_i, v_j) | \beta_{ij}(k) \neq 0 \}$ is the set of directed edges. 

We make the following assumptions 
in order for
the system in~\eqref{eq:dt_SIR} 
to be
well defined.
\begin{assumption}\label{assume:one}
    For every $i\in [n]$, $h \gamma_i(k) > 0 $ and $\forall j\in [n], \beta_{ij}(k) \geq 0$, for every $k\in \mathbb{Z}_{\geq 0}$.
\end{assumption}
\begin{assumption}\label{assume:two}
    For every $i\in [n]$, $h \gamma_i(k) \leq 1 $ and $h \sum_j \beta_{ij}(k) \leq 1$, for every $k\in \mathbb{Z}_{\geq 0}$.
\end{assumption}
We have the following result which shares the same idea as the time-invariant model, proven in~\cite{hota2020closed}.
\begin{lemma}\label{lemma:one}
Suppose $s_i(0), x_i(0), r_i(0) \in [0,1]$, $s_i(0)+x_i(0)+r_i(0) =1$, and Assumptions~\ref{assume:one} and~\ref{assume:two} hold. Then, for all $k \in \mathbb{Z}_{\geq0}$,
\begin{enumerate}
    \item $s_i(k), x_i(k), r_i(k) \in [0,1]$, 
    \item $s_i(k)+x_i(k)+r_i(k) =1$, and
    \item $s_i(k+1) \leq s_i(k)$.
\end{enumerate}
\end{lemma}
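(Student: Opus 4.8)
The plan is to prove all three statements simultaneously by induction on $k$. The base case $k=0$ is exactly the stated hypothesis, so the entire content lies in the inductive step. I would assume that statements (1) and (2) hold at time $k$ and derive all three at $k+1$, organizing the argument so that the conservation law comes first, then non-negativity, and finally the upper bounds and the monotonicity statement, which both fall out cheaply once the first two are in place.

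The cleanest entry point is the conservation law. Summing the three update equations \eqref{eq:dt_SIRsub1}--\eqref{eq:dt_SIRsub3}, the contributions $\pm\, h s_i(k)\sum_j \beta_{ij}(k)x_j(k)$ and $\pm\, h\gamma_i(k)x_i(k)$ cancel in pairs, leaving $s_i(k+1)+x_i(k+1)+r_i(k+1)=s_i(k)+x_i(k)+r_i(k)=1$ by the inductive hypothesis. This gives statement (2) at $k+1$ outright, independent of any bound.

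Next I would establish non-negativity of each quantity at $k+1$. Rewriting \eqref{eq:dt_SIRsub1} as $s_i(k+1)=s_i(k)\bigl[1-h\sum_j\beta_{ij}(k)x_j(k)\bigr]$ and using $x_j(k)\le 1$ from the inductive hypothesis, the infection term is bounded by $h\sum_j\beta_{ij}(k)x_j(k)\le h\sum_j\beta_{ij}(k)\le 1$ via Assumption~\ref{assume:two}, so the bracket is non-negative and hence $s_i(k+1)\ge 0$. Grouping \eqref{eq:dt_SIRsub2} as $x_i(k+1)=x_i(k)\bigl[1-h\gamma_i(k)\bigr]+h s_i(k)\sum_j\beta_{ij}(k)x_j(k)$, the first term is non-negative because $h\gamma_i(k)\le 1$ (Assumption~\ref{assume:two}), and the second is a sum of products of non-negative factors (Assumption~\ref{assume:one} together with the inductive hypothesis), so $x_i(k+1)\ge 0$; similarly \eqref{eq:dt_SIRsub3} adds a non-negative quantity to $r_i(k)\ge 0$. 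With non-negativity and the sum equal to one now both available, each variable is automatically at most one, completing statement (1) at $k+1$, and statement (3) follows immediately from the factored form of \eqref{eq:dt_SIRsub1}, whose subtracted term $h s_i(k)\sum_j\beta_{ij}(k)x_j(k)$ is non-negative.

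The main subtlety, more than a genuine obstacle, is that the three statements are coupled and cannot be proved in isolation: the upper bound $s_i(k+1)\le 1$ rests on the conservation law, whereas the non-negativity $s_i(k+1)\ge 0$ rests on the \emph{previous-step} upper bound $x_j(k)\le 1$ through Assumption~\ref{assume:two}. The induction hypothesis must therefore bundle all of (1) and (2) together, and the one place where Assumption~\ref{assume:two} is genuinely load-bearing is in controlling the infection term so that the factor $1-h\sum_j\beta_{ij}(k)x_j(k)$ does not go negative.
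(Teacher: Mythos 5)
Your proof is correct and is essentially the standard argument: the paper itself omits a written proof, deferring to the time-invariant case in the cited reference, and the induction you give (conservation by telescoping cancellation, non-negativity via the factored forms and Assumptions~\ref{assume:one}--\ref{assume:two}, upper bounds from non-negativity plus the sum constraint, and monotonicity from the non-negative subtracted term) is exactly that argument adapted to the time-varying parameters. Your closing observation about where Assumption~\ref{assume:two} is load-bearing is accurate.
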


\begin{definition}
We define the set of healthy states of~\eqref{eq:dt_SIR} as $\{s_i^*(k), x_i^*(k), r_i^*(k): i\in [n], k \in \mathbb{Z}_{\geq 0}\}$, where $x_i^*(k)=0$, $s_i^*(k)\in [0,1],$ and $  r_i^*(k)\in [0,1]$ for all $i \in [n]$.
\end{definition}

Given a network that is infected by a virus, our goal is to guarantee that each subpopulation $v_i$ converges to the set of healthy states in exponential time regardless of the initial conditions of the each subpopulations. 
We now officially state the questions being studied in this paper:
\begin{enumerate}[label=(\roman*)]
    \item \label{itm:first} For the system with dynamics given in~\eqref{eq:SIR_dynamic}, under what condition is the set of healthy states, i.e., 
    $x(k)=\mathbf{0}$,
    global exponentially stable (GES)?
    \item \label{itm:second_a} Given the testing data, how can the stochastic framework be constructed to estimate
    the susceptible,
    infected,  
    and recovered proportions, 
    denoted by
    $\widehat{s_i}(k)$, $\widehat{x_i}(k)$, and $\widehat{r_i}(k)$, respectively, 
    for each subpopulation $v_i$ in the network?
    \item \label{itm:second_b} What is the estimation error of the stochastic framework that we proposed?
    \item \label{itm:third_a} Given the knowledge of the conditions that ensure GES of a healthy state, i.e., 
    $x(k)=\mathbf{0}$ and 
    $\widehat{s_i}(k)$ inferred from testing data, 
    how can we devise dynamic control algorithms which apply new healing rates $\widehat{\gamma_i}(k)$ to each agent in~\eqref{eq:SIR_dynamic} so that the epidemic is eradicated with a faster rate of convergence than the rate of exponential?
\end{enumerate}

\section{Stability Analysis} \label{sec:stability}

This section 
presents
conditions that ensure
global exponential stability of the set of healthy states. 
First, we introduce some preliminaries and then we present our main analysis results.

\subsection{Preliminaries}
In this subsection, we 
recall 
results that are crucial for understanding the rest of the paper.

\begin{lemma}\label{lemma:diagonal}
\cite{rantzer2011distributed}
Suppose that $M$ is a nonnegative matrix which satisfies $\rho(M)<1$. Then there exists a diagonal matrix $P \succ 0$ such that $M^\top P M -P \prec 0$. 
\end{lemma}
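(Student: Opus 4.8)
The plan is to exhibit an \emph{explicit} diagonal $P$, built from the Perron--Frobenius eigenvectors of $M$, and to verify the Stein-type inequality by a convexity (Jensen) estimate. Note first that $M^\top P M - P \prec 0$ with $P = D^2$, $D = \text{diag}(d_i) \succ 0$, is the same as $\|D M D^{-1}\|_2 < 1$; rather than chase this singular-value bound I would keep the quadratic-form viewpoint $x^\top (M^\top P M - P)x < 0$, since that is what the convexity argument controls directly.

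First I would reduce to the strictly positive (hence irreducible) case. Since the spectral radius is continuous in the entries and $\rho(M)<1$, the perturbation $M_\epsilon := M + \epsilon \mathbf{1}\mathbf{1}^\top$ still satisfies $\rho(M_\epsilon)<1$ for all sufficiently small $\epsilon>0$, while $M_\epsilon$ is entrywise positive. The key observation is that a diagonal $P\succ0$ that works for $M_\epsilon$ automatically works for $M$: using $m_{ij}\ge 0$ we have $|(Mx)_i| \le (M|x|)_i \le (M_\epsilon |x|)_i$ entrywise, so for every $x\neq 0$, $x^\top M^\top P M x = \sum_i p_i (Mx)_i^2 \le \sum_i p_i (M_\epsilon |x|)_i^2 = |x|^\top M_\epsilon^\top P M_\epsilon |x| < |x|^\top P |x| = x^\top P x$, the strict step being the inequality for $M_\epsilon$ applied to the nonzero vector $|x|$. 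Thus it suffices to prove the lemma for $M_\epsilon$, and entrywise nonnegativity of $M$ is exactly what makes this descent valid.

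For the positive matrix $M_\epsilon$, Perron--Frobenius supplies strictly positive right and left eigenvectors $u>0$ and $w>0$ with $M_\epsilon u = \rho_\epsilon u$ and $w^\top M_\epsilon = \rho_\epsilon w^\top$, where $\rho_\epsilon = \rho(M_\epsilon)<1$. I would then take $P = \text{diag}(w_i/u_i)\succ0$ and show $M_\epsilon^\top P M_\epsilon \preceq \rho_\epsilon^2 P$. Writing $(M_\epsilon x)_i = \rho_\epsilon u_i \sum_j q_{ij}(x_j/u_j)$ with $q_{ij} := (M_\epsilon)_{ij} u_j/(\rho_\epsilon u_i)$, the relation $\sum_j q_{ij}=1$ follows from $M_\epsilon u=\rho_\epsilon u$, so Jensen's inequality gives $(M_\epsilon x)_i^2 \le \rho_\epsilon^2 u_i^2 \sum_j q_{ij}(x_j/u_j)^2$; multiplying by $p_i = w_i/u_i$, summing over $i$, and using $w^\top M_\epsilon = \rho_\epsilon w^\top$ to collapse the double sum yields $x^\top M_\epsilon^\top P M_\epsilon x \le \rho_\epsilon^2\, x^\top P x$. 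Since $\rho_\epsilon^2<1$ and $P\succ0$, this gives $M_\epsilon^\top P M_\epsilon - P \preceq (\rho_\epsilon^2-1)P \prec 0$, and the descent above transfers the same strict inequality to $M$.

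The main obstacle is identifying the correct diagonal $P$: a single Lyapunov series $\sum_k (M^\top)^k M^k$ need not be diagonal, and a diagonal-dominance bound is too lossy to force negative definiteness. The winning choice $P=\text{diag}(w_i/u_i)$ couples \emph{both} Perron eigenvectors so that the weighted Jensen step closes exactly with contraction factor $\rho_\epsilon^2$. The only technical care needed is that these eigenvectors be strictly positive, which is precisely why the positivity-preserving perturbation to $M_\epsilon$ is introduced before invoking Perron--Frobenius.
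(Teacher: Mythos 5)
Your proof is correct. The paper does not prove this lemma at all --- it is imported by citation from Rantzer's work on positive systems --- and your argument is essentially the standard one from that literature: build the diagonal Lyapunov matrix $P=\mathrm{diag}(w_i/u_i)$ from the left and right Perron eigenvectors, close the Stein inequality with contraction factor $\rho^2$ via the weighted Jensen step, and handle reducible $M$ (where strictly positive eigenvectors may fail to exist) by the positivity-preserving perturbation $M_\epsilon = M+\epsilon\mathbf{1}\mathbf{1}^\top$ together with the monotone descent $x^\top M^\top P M x \le |x|^\top M_\epsilon^\top P M_\epsilon |x|$. All steps check out, including the two points that require care: $\sum_j q_{ij}=1$ follows from $M_\epsilon u=\rho_\epsilon u$, and the double sum collapses via $w^\top M_\epsilon=\rho_\epsilon w^\top$, so no strictness is needed in Jensen since $\rho_\epsilon^2-1<0$ already forces $M_\epsilon^\top P M_\epsilon - P \preceq (\rho_\epsilon^2-1)P \prec 0$.
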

Consider a system described as follows:
\begin{equation}\label{eq:lip}
    x(k+1) = f(k,x(k))\phil{.}
\end{equation}
\begin{definition}\label{def:GES}
An equilibrium point of \eqref{eq:lip} is GES is there exist positive constants $\alpha$ and $\omega$, with $0\leq \omega <1$, such that
\begin{equation}
    \lVert x(k) \rVert \leq \alpha \lVert x(k_0) \rVert \omega^{(k-k_0)}, \forall k, k_0 \geq 0, \forall x(k_0) \in \mathbb{R}^n.
\end{equation}
\end{definition}
We recall a sufficient condition for GES of an equilibrium of \eqref{eq:lip} 
from~\cite{vidyasagar2002nonlinear}.
\begin{lemma}\label{lemma:GES}
\cite[Theorem 28]{vidyasagar2002nonlinear}
Suppose there exists a function $V: \mathbb{Z}_+ \times \mathbb{R}^n \rightarrow \mathbb{R}$, and constants $a,b,c>0$ and $p>1$ such that $a\lVert x \rVert^p \leq V(k,x) \leq b \lVert x \rVert^p$, $\Delta V(k,x):= V(x(k+1))-V(x(k)) \leq -c\lVert x \rVert^p, \forall k \in \mathbb{Z}_{\geq0}$, and $\forall x (k_0)\in \mathbb{R}^n$, then $x(k)=\mathbf{0}$ is a globally exponential stable equilibrium of \eqref{eq:lip}. 
\end{lemma}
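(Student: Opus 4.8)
The plan is to establish that the Lyapunov function $V$ contracts geometrically along every trajectory of \eqref{eq:lip}, and then to transfer that contraction to the state norm through the two-sided sandwich bound $a\|x\|^p \le V(k,x) \le b\|x\|^p$.

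First I would turn the one-step decrease condition into a multiplicative decay. Along any trajectory we have $x(k+1)=f(k,x(k))$, so the hypothesis $\Delta V \le -c\|x(k)\|^p$ reads $V(x(k+1)) \le V(x(k)) - c\|x(k)\|^p$. The upper bound gives $\|x(k)\|^p \ge V(x(k))/b$, and substituting produces
\begin{equation}
    V(x(k+1)) \le \Big(1-\tfrac{c}{b}\Big)V(x(k)) =: \lambda\, V(x(k)).
\end{equation}
Iterating this from $k_0$ to $k$ yields $V(x(k)) \le \lambda^{k-k_0} V(x(k_0))$. I would then convert back to the norm by applying the lower bound on the left and the upper bound on the right,
\begin{equation}
    a\|x(k)\|^p \le V(x(k)) \le \lambda^{k-k_0} V(x(k_0)) \le b\,\lambda^{k-k_0}\|x(k_0)\|^p,
\end{equation}
so that $\|x(k)\|^p \le (b/a)\lambda^{k-k_0}\|x(k_0)\|^p$. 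Taking the $p$-th root and defining $\alpha := (b/a)^{1/p}$ and $\omega := \lambda^{1/p}$ recovers precisely the inequality $\|x(k)\| \le \alpha\|x(k_0)\|\omega^{k-k_0}$ required by Definition~\ref{def:GES}.

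The step that needs care, and which I regard as the main obstacle, is verifying that $\omega$ is a genuine contraction factor, i.e. that $0 \le \lambda < 1$ so that $\omega = \lambda^{1/p} \in [0,1)$ is well defined. The bound $\lambda < 1$ is immediate from $c,b>0$. For $\lambda \ge 0$ I would argue from internal consistency of the hypotheses: since $V(k,x) \ge a\|x\|^p \ge 0$ for all $k$ and $x$, the relation $V(x(k+1)) \le \lambda V(x(k))$ combined with $V(x(k+1)) \ge 0$ forces $\lambda \ge 0$ at every step where $V(x(k)) > 0$; hence $c \le b$ is implied and the $p$-th root is real. The sandwich bound likewise requires $a \le b$, so $\alpha \ge 1 > 0$, and all constants then meet the requirements of Definition~\ref{def:GES}, completing the argument.
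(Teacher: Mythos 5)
Your proof is correct. The paper does not prove this lemma itself---it imports it verbatim from \cite[Theorem 28]{vidyasagar2002nonlinear}---but your argument is the standard one behind that result: convert the decrement condition into the geometric contraction $V(x(k+1))\le(1-c/b)V(x(k))$ via the upper sandwich bound, iterate, and undo the sandwich with the lower bound to get $\lVert x(k)\rVert\le (b/a)^{1/p}(1-c/b)^{(k-k_0)/p}\lVert x(k_0)\rVert$. As a bonus, your explicit rate $\omega=(1-c/b)^{1/p}$ specializes at $p=2$ to the $\sqrt{1-c/b}$ quoted in Lemma~\ref{lemma:rate_GES} and used in Corollary~\ref{coro:rate_exp}, and your observation that nonnegativity of $V$ forces $c\le b$ (so that $\omega$ is well defined) is exactly the consistency check needed there.
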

\begin{lemma}\label{lemma:rate_GES}
\cite[Theorem 23.3]{rugh1996linear} Under the assumption of Lemma~\ref{lemma:GES}, the rate of convergence to 
the origin 
is upper bounded by an exponential rate of $\sqrt{1-(c/b)} \in [0,1)$, where $b$ and $c$ are defined in Lemma~\ref{lemma:GES}.
\end{lemma}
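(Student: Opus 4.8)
The plan is to chain the two hypotheses of Lemma~\ref{lemma:GES} into a one‑step contraction on the Lyapunov function itself, then iterate and convert the result back into a bound on the state norm. First I would start from the decrease condition $\Delta V(k,x) = V(x(k+1)) - V(x(k)) \leq -c\lVert x(k) \rVert^p$ and eliminate $\lVert x(k)\rVert^p$ using the upper sandwich bound $V(x(k)) \leq b\lVert x(k)\rVert^p$, i.e. $\lVert x(k)\rVert^p \geq V(x(k))/b$. Substituting yields the geometric one‑step estimate
\[
V(x(k+1)) \leq \left(1 - \tfrac{c}{b}\right) V(x(k)).
\]

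Next I would iterate this inequality from $k_0$ to $k$, obtaining $V(x(k)) \leq (1 - c/b)^{k-k_0} V(x(k_0))$. Applying the lower bound $a\lVert x(k)\rVert^p \leq V(x(k))$ on the left and the upper bound $V(x(k_0)) \leq b\lVert x(k_0)\rVert^p$ on the right gives
\[
\lVert x(k)\rVert^p \leq \frac{b}{a}\left(1 - \tfrac{c}{b}\right)^{k-k_0}\lVert x(k_0)\rVert^p .
\]
Taking the $p$‑th root identifies the exponential rate as $\omega = (1 - c/b)^{1/p}$ with prefactor $\alpha = (b/a)^{1/p}$, matching the form in Definition~\ref{def:GES}. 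Specializing to the quadratic Lyapunov certificate ($p=2$) furnished by Lemma~\ref{lemma:diagonal} in the subsequent stability analysis then recovers exactly $\omega = \sqrt{1 - c/b}$.

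Finally I would verify $1 - c/b \in [0,1)$, which is the only genuinely nontrivial point. The upper bound is immediate, since $b,c>0$ force $c/b > 0$ and hence $1 - c/b < 1$. For the lower bound I would use nonnegativity of $V$: because $V(x(k+1)) \geq a\lVert x(k+1)\rVert^p \geq 0$, the one‑step estimate rewritten as $0 \leq V(x(k+1)) \leq (b-c)\lVert x(k)\rVert^p$ forces $c \leq b$ whenever $x(k) \neq \mathbf{0}$, so $1 - c/b \geq 0$ and the square root is real. I expect the main obstacle to be conceptual rather than computational — the iteration is routine — namely recognizing that $b$ and $c$ are not independent: the mere existence of a valid Lyapunov certificate already constrains $c \leq b$, and this is precisely what keeps the claimed rate inside $[0,1)$.
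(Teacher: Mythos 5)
The paper does not actually prove this lemma: it is imported verbatim as \cite[Theorem 23.3]{rugh1996linear} and used as a black box in the proof of Corollary~\ref{coro:rate_exp}, so there is no in-paper argument to compare against. Your derivation is the standard textbook proof of that cited result, and it is correct: the one-step contraction $V(x(k+1))\leq(1-c/b)V(x(k))$ obtained by combining $\Delta V\leq -c\lVert x\rVert^p$ with $V\leq b\lVert x\rVert^p$, the iteration, and the sandwich step yielding $\alpha=(b/a)^{1/p}$ and $\omega=(1-c/b)^{1/p}$ are all sound, and your observation that the certificate itself forces $c\leq b$ (via $0\leq V(x(k+1))\leq(b-c)\lVert x(k)\rVert^p$ for nonzero $x(k)$) is exactly the right way to see that the rate lies in $[0,1)$ --- this is the same mechanism the paper must reprove by hand in Corollary~\ref{coro:rate_exp}, where it uses Weyl's inequalities to show $\sigma_2\geq\sigma_3$. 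Two small presentational points: the iteration step $V(x(k))\leq(1-c/b)^{k-k_0}V(x(k_0))$ already requires $1-c/b\geq 0$ to preserve the inequality direction, so the nonnegativity argument you defer to the end should logically come first; and the $\sqrt{\cdot}$ in the statement tacitly assumes $p=2$, whereas Lemma~\ref{lemma:GES} allows any $p>1$ --- your general rate $(1-c/b)^{1/p}$ is the honest conclusion, with the square root recovered only for the quadratic certificate actually used in Theorem~\ref{thm:GES}.
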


Note that a healthy state of the system in \eqref{eq:dt_SIR} is GES if Assumptions \ref{assume:one} and \ref{assume:two} hold and the condition in Definition~\ref{def:GES} and Lemma~\ref{lemma:GES} are satisfied for all $x(k_0) \in [0,1]^n$, since this is the domain where the model is well defined.


\subsection{Global Exponential Stability of the Healthy States}
In this subsection, 
we present sufficient conditions for the global exponential stability of the set of healthy states of the system. We find the conditions by analyzing the spectral radius of the state transition matrix of~\eqref{eq:dt_SIRsub2}. 
We define
\begin{align}
     M(k) &= I-h\Gamma(k) +hB(k), \label{eq:M} \\
     \hat{M}(k)  &= I+ h[S(k)B(k) - \Gamma(k)].
    \label{eq:Mhat}
\end{align}
Notice that $\hat{M}(k)$ is the state transition matrix of~\eqref{eq:dt_SIRsub2} and it can be written that
\begin{equation}\label{eq:M_M_hat}
    \hat{M}(k) = M(k)-h(I-S(k))B(k).
\end{equation}
We use $M(k)$ and $\eqref{eq:M_M_hat}$ to illustrate the sufficient conditions for the GES of the set of healthy states in the subsequent theorem.
\begin{theorem}\label{thm:GES}
Given Assumptions \ref{assume:one} and \ref{assume:two}, suppose for all $k \in \mathbb{Z}_{\geq0}$, $B(k)$ is symmetric. 
If $\text{sup}_{k\in \mathbb{Z}_{\geq 0}} \rho(M(k))< 1$, 
then the set of healthy states of \eqref{eq:dt_SIR} is GES.
\end{theorem}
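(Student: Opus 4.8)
The plan is to prove the contraction directly in the Euclidean norm, using the symmetry of $B(k)$ to convert the spectral-radius hypothesis on $M(k)$ into a uniform operator-norm bound on the true state-transition matrix $\hat{M}(k)$. Recall from \eqref{eq:SIR_dynamic} and \eqref{eq:Mhat} that the infected states evolve as $x(k+1)=\hat{M}(k)x(k)$, and from \eqref{eq:M_M_hat} that $\hat{M}(k)=M(k)-h(I-S(k))B(k)$. The target is the GES bound of Definition~\ref{def:GES} for $x(k)=\mathbf{0}$, valid on the well-defined domain $x(k_0)\in[0,1]^n$.

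First I would record the two structural facts about $M(k)=I-h\Gamma(k)+hB(k)$. By Assumption~\ref{assume:two}, $1-h\gamma_i(k)\geq 0$, and by Assumption~\ref{assume:one}, $h\beta_{ij}(k)\geq 0$; hence $M(k)$ is entrywise nonnegative. Since $B(k)$ is symmetric and $\Gamma(k)$ is diagonal, $M(k)$ is symmetric, so its Euclidean operator norm coincides with its spectral radius, $\lVert M(k)\rVert=\rho(M(k))$. Writing $\bar{\rho}:=\sup_{k\in\mathbb{Z}_{\geq 0}}\rho(M(k))<1$, we therefore obtain the uniform bound $\lVert M(k)\rVert\leq\bar{\rho}<1$ for every $k$.

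Next I would establish the entrywise domination $0\leq\hat{M}(k)\leq M(k)$. By Lemma~\ref{lemma:one}, $s_i(k)\in[0,1]$, so $S(k)$ is diagonal with entries in $[0,1]$ and $h(I-S(k))B(k)\geq 0$ entrywise; subtracting it from $M(k)$ keeps $\hat{M}(k)$ nonnegative while not increasing any entry. A componentwise argument then gives the key estimate: for any $x$, writing $\lvert\cdot\rvert$ for the componentwise absolute value, one has $\lvert\hat{M}(k)x\rvert\leq\hat{M}(k)\lvert x\rvert\leq M(k)\lvert x\rvert$ entrywise, whence
\[
\lVert\hat{M}(k)x\rVert\leq\lVert M(k)\lvert x\rvert\rVert\leq\lVert M(k)\rVert\,\lVert x\rVert\leq\bar{\rho}\,\lVert x\rVert .
\]
Applying this at $x=x(k)$ yields $\lVert x(k+1)\rVert\leq\bar{\rho}\,\lVert x(k)\rVert$, and iterating gives $\lVert x(k)\rVert\leq\bar{\rho}^{\,k-k_0}\lVert x(k_0)\rVert$, i.e.\ GES with $\alpha=1$ and $\omega=\bar{\rho}$. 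Equivalently, I can take $V(x)=\lVert x\rVert^2$ and invoke Lemma~\ref{lemma:GES} with $a=b=1$, $c=1-\bar{\rho}^2>0$, $p=2$; Lemma~\ref{lemma:rate_GES} then recovers the rate.

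The main obstacle, and the reason the symmetry hypothesis is needed, is that $\hat{M}(k)=I-h\Gamma(k)+hS(k)B(k)$ is generally \emph{not} symmetric, so one cannot equate its operator norm with its spectral radius, and a pointwise bound $\rho(\hat{M}(k))<1$ would not suffice for a time-varying system. The bridge is to dominate the non-symmetric $\hat{M}(k)$ entrywise by the symmetric $M(k)$, for which $\lVert M(k)\rVert=\rho(M(k))$ holds, thereby turning the hypothesis $\sup_k\rho(M(k))<1$ into a genuine uniform contraction in the Euclidean norm. I would be careful to check that the supremum furnishes a strict uniform gap $\bar{\rho}<1$ rather than merely pointwise bounds approaching $1$, which is exactly what the $\sup$ in the statement guarantees; the nonnegativity supplied by Lemma~\ref{lemma:one} and Assumptions~\ref{assume:one}--\ref{assume:two} is what makes the entrywise comparison legitimate.
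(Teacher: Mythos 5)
Your proof is correct, but it takes a genuinely different route from the paper's. The paper invokes Lemma~\ref{lemma:diagonal} to obtain time-varying diagonal matrices $Q(k)\succ 0$ with $M^\top(k)Q(k+1)M(k)-Q(k)\prec 0$, builds the Lyapunov function $V(k,x)=x^\top Q(k)x$, expands $\Delta V$ along the decomposition \eqref{eq:M_M_hat}, shows the cross and quadratic correction terms are nonpositive (using $x\geq 0$ and the sign structure of $-2(I-h\Gamma(k))-h(I+S(k))B(k)$), and then applies Lemma~\ref{lemma:GES}. You instead exploit symmetry to get $\lVert M(k)\rVert=\rho(M(k))\leq\bar{\rho}<1$ and the entrywise domination $0\leq\hat{M}(k)\leq M(k)$ to obtain a one-step Euclidean contraction $\lVert x(k+1)\rVert\leq\bar{\rho}\lVert x(k)\rVert$ directly. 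Your argument is more elementary, gives the explicit rate $\omega=\bar{\rho}$ with $\alpha=1$, and avoids the most delicate point of the published proof: Lemma~\ref{lemma:diagonal} is stated for a single nonnegative Schur matrix, and its extension to a sequence $Q(k)$, $Q(k+1)$ satisfying the time-varying inequality with uniform constants $\sigma_1,\sigma_2,\sigma_3$ is asserted rather than fully constructed in the paper. What the paper's route buys is the quadratic certificate $V(k,x)=x^\top Q(k)x$, which is reused verbatim to extract the convergence-rate bound of Corollary~\ref{coro:rate_exp}, and which is the natural template if one wished to drop the symmetry hypothesis (diagonal Lyapunov matrices exist for nonnegative Schur matrices regardless of symmetry, whereas your identity $\lVert M\rVert=\rho(M)$ genuinely needs normality). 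Both arguments rest on the same two pillars, the decomposition \eqref{eq:M_M_hat} and the invariance facts of Lemma~\ref{lemma:one}, and your restriction of the GES claim to initial conditions in $[0,1]^n$ matches the paper's stated convention.
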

\begin{proof}
See Appendix. 
\end{proof}

Recall from the previous result that $M(k)$ is a nonnegative matrix which satisfies $\text{sup}_{k\in \mathbb{Z}_{\geq 0}} \rho(M(k))< 1$, such that $M^\top(k) Q(k+1) M(k) - Q(k)\prec 0$, where $Q(k)$ is a diagonal matrix defined in the Lyapunov function:
\begin{equation}
    V(k,x) = x^\top Q(k)x.
\end{equation}
\begin{corollary}\label{coro:rate_exp}
Under the assumptions of Theorem~\ref{thm:GES}, the rate of convergence to a healthy state is upper bounded by an exponential rate of
$\sqrt{1-\frac{\sigma_3}{\sigma_2}}$, where $\sigma_2=\max_{k\in \mathbb{Z}_{\geq 0}} \lambda_{\text{max}}(Q(k))$, $\sigma_3 =\max_{k\in \mathbb{Z}_{\geq 0}} \lambda_{\text{min}}[Q(k)-M(k)^\top  Q(k+1)M(k)]$.
\end{corollary}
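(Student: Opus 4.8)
The plan is to reuse the quadratic Lyapunov function $V(k,x) = x^\top Q(k) x$ constructed in the proof of Theorem~\ref{thm:GES} and to read off the two constants that Lemma~\ref{lemma:rate_GES} requires. Since that theorem already certifies GES by verifying the hypotheses of Lemma~\ref{lemma:GES} with $p = 2$, the corollary is essentially a bookkeeping statement: I just need to identify the upper-bound constant $b$ and the decrement constant $c$ entering $\sqrt{1 - c/b}$ and match them to $\sigma_2$ and $\sigma_3$.

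First I would pin down $b$. Because each $Q(k)$ is diagonal and positive definite, $V(k,x) \le \lambda_{\max}(Q(k)) \|x\|^2$, and taking the worst case over $k$ gives $V(k,x) \le \sigma_2 \|x\|^2$ with $\sigma_2 = \max_k \lambda_{\max}(Q(k))$; hence $b = \sigma_2$. (The lower constant $a = \min_k \lambda_{\min}(Q(k))$ plays no role in the rate.)

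Next I would compute the one-step decrement along \eqref{eq:SIR_dynamic}. Using $x(k+1) = \hat{M}(k) x(k)$ gives $\Delta V(k,x) = x(k)^\top [\hat{M}(k)^\top Q(k+1) \hat{M}(k) - Q(k)] x(k)$. The crucial move is to trade the state-dependent transition matrix $\hat{M}(k)$ for the analytically tractable $M(k)$. From \eqref{eq:M_M_hat} we have $\hat{M}(k) = M(k) - h(I - S(k)) B(k)$; since Lemma~\ref{lemma:one} keeps $x(k)$ in the nonnegative orthant while $I - S(k) \succeq 0$ is diagonal and $B(k) \ge 0$ entrywise, it follows that $0 \le \hat{M}(k) x(k) \le M(k) x(k)$ componentwise. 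As $Q(k+1)$ is diagonal with positive entries, this componentwise ordering survives the quadratic form, so $x^\top \hat{M}^\top Q(k+1) \hat{M} x \le x^\top M^\top Q(k+1) M x$ and therefore $\Delta V(k,x) \le -x(k)^\top [Q(k) - M(k)^\top Q(k+1) M(k)] x(k) \le -\sigma_3 \|x(k)\|^2$, with $\sigma_3 > 0$ uniform in $k$ because $M(k)^\top Q(k+1) M(k) - Q(k) \prec 0$. This identifies $c = \sigma_3$, and substituting $b = \sigma_2$, $c = \sigma_3$ into Lemma~\ref{lemma:rate_GES} yields the stated rate $\sqrt{1 - \sigma_3/\sigma_2}$.

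I expect the comparison step — replacing $\hat{M}(k)$ by $M(k)$ inside the quadratic form — to be the part requiring real care, since it is exactly what makes the rate computable from $M(k)$ alone; it hinges on nonnegativity of the SIR trajectories (Lemma~\ref{lemma:one}) together with the diagonal structure of $Q(k)$, without which the entrywise inequality would not transfer to the Loewner/quadratic ordering. A secondary bookkeeping point is that $c$ must lower-bound $\lambda_{\min}[Q(k) - M(k)^\top Q(k+1) M(k)]$ simultaneously over all $k$ for the decrement to hold uniformly, so it should be taken as the infimum over $k$; one should also confirm $\sigma_3 \le \sigma_2$ (which follows from $Q(k) - M(k)^\top Q(k+1) M(k) \preceq Q(k)$) so that $\sqrt{1 - \sigma_3/\sigma_2}$ is real and lies in $[0,1)$.
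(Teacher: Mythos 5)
Your proposal is correct and follows the same overall strategy as the paper: exhibit the constants $b=\sigma_2$ and $c=\sigma_3$ for the Lyapunov function $V(k,x)=x^\top Q(k)x$ and feed them into Lemma~\ref{lemma:rate_GES}, then check that the resulting rate lies in $[0,1)$. Two of your sub-steps differ from the paper's, both legitimately. For the decrement bound, the paper expands $\hat{M}(k)=M(k)-h(I-S(k))B(k)$ inside the quadratic form and shows the cross and quadratic correction terms combine into $x^\top\{hB^\top(I-S)Q(k+1)[-2(I-h\Gamma)-h(I+S)B]\}x\le 0$ using nonnegativity of $x$; you instead use the componentwise ordering $0\le \hat{M}(k)x\le M(k)x$ together with the diagonality and positivity of $Q(k+1)$, which is a cleaner route to the same inequality and correctly isolates why diagonality of $Q$ matters. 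For well-definedness, the paper invokes Weyl's inequalities to get $\sigma_2\ge\sigma_3$, whereas you read it off from $Q(k)-M(k)^\top Q(k+1)M(k)\preceq Q(k)$ (since $M^\top QM\succeq 0$); these are the same monotonicity fact, yours stated more economically. Finally, your ``bookkeeping'' remark is actually a substantive catch: for $\Delta V(k,x)\le-\sigma_3\|x\|^2$ to hold uniformly in $k$, $\sigma_3$ must lower-bound $\lambda_{\min}[Q(k)-M(k)^\top Q(k+1)M(k)]$ over all $k$, so the $\max_{k}$ in the paper's definition of $\sigma_3$ should be a $\min_k$ (or $\inf_k$, with the additional caveat that this infimum over an infinite index set must be shown to remain strictly positive). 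Your version of the constant is the one that makes the argument go through.
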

\begin{proof}
See Appendix. 
\end{proof}
\begin{remark}
Notice that in Theorem~\ref{thm:GES}, $\text{sup}_{k\in \mathbb{Z}_{\geq 0}} \rho(M(k))< 1$ is the key condition which ensures the set of healthy states of~\eqref{eq:dt_SIR} is GES. We can interprete $\rho(M)$ in the context of epidemiology as the basic reproduction number of the virus over the network. Particularly, Theorem~\ref{thm:GES} affirms that given the time-varying parameters of the network satisfy the condition provided, 
the mutating virus will exponentially converge to the set of healthy states. 
\end{remark}
In this section, we found the conditions that ensure exponential convergence to the set of healthy states of~\eqref{eq:dt_SIRsub2} in Theorem~\ref{thm:GES}. This answers question~\ref{itm:first} in Section~\ref{sec:model}. 
The stability condition can help the policymakers reallocate the medical resources, staff etc. which leads to modifying the parameters in~\eqref{eq:dt_SIR} so that the spreading of the virus stops completely. 
One of the other factors that will assist in decision making is the COVID-19 observed testing data.

\section{State Estimation from Testing Data} \label{sec:inference}

In this section, we study how to estimate the epidemic states $(s(k), x(k), r(k))$ from testing data in order to design a feedback controller in the following section. One of the challenges of estimating the underlying system states is that the testing data on a given day does not capture the new infections on the same day. Instead, the testing data is a delayed representation of the change in the system. Characterizing the delay of each individual is difficult, because the delay is determined by numerous factors such as the incubation period of COVID-19, the duration of obtaining test results, the willingness of each individual to get tested, etc. Therefore, we propose a stochastic framework in this section to capture the factors which cause the testing delay.

\begin{definition}
The testing delay $\tau_i$ is the length of time between when an individual from subpopulation $v_i$ is infected and when their positive test result is reported.
\end{definition}
In our discrete-time model, we assume that $\tau_i \in \mathbb{Z}_{\geq 0}$.
We model the testing delay of each infected individual $\tau_i$ as two aggregate components to represent the uncertainty in the testing process:
\begin{equation}\label{eq:testing_delay}
    \tau_i =\eta_i+\mathcal{Y}_i, 
\end{equation}
where $\eta_i\in \mathbb{Z}_{\geq 0}$ is a constant and $\mathcal{Y}_i$ is sampled from a discrete time random variable 
whose measurable space is~$\mathbb{Z}_{\geq 0}$.
\begin{remark}
In~\eqref{eq:testing_delay}, the constant component $\eta_i$ can be interpreted as the length of time needed to acquire testing results. 
The random variable can be interpreted as the incubation period and/or the amount of time that it takes an individual to get tested after becoming infected. 
\end{remark}


\begin{figure}
\centering
\begin{overpic}[width = \columnwidth]{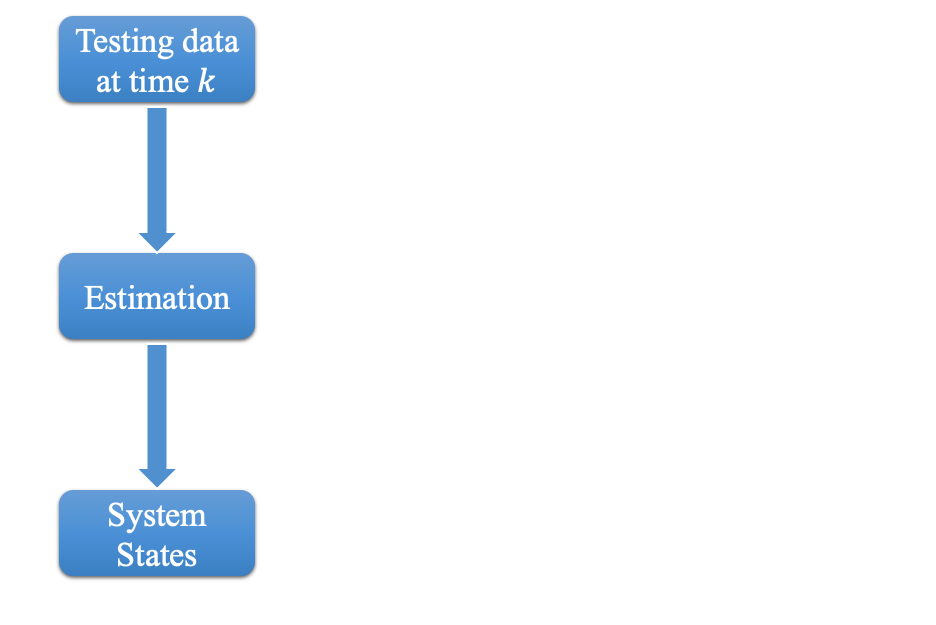}
        \put(21,47){{\parbox{0.75\linewidth}{\small
     $\Omega_i(k) = \big(C_i(k), D_i(k)\big)$ $\rightarrow \big( c_i(k), d_i(k)\big)$
     }}}\normalsize
     \put(21,22){{\parbox{0.75\linewidth}{\small
     $\widehat{\Theta}_i(k-\tau_i)=\big(\widehat{s_i}(k-\tau_i),\widehat{x_i}(k-\tau_i),\widehat{r_i}(k-\tau_i)\big)$}}}\normalsize
     \put(21,1){{\parbox{0.75\linewidth}{\small
     $\widehat{\Theta}_i(k)=\big(\widehat{s_i}(k),\widehat{x_i}(k),\widehat{r_i}(k)\big)$}}}\normalsize
      \end{overpic}
\caption{Estimation of System States from Testing Data 
}
\label{fig:Diagram_Inference}
\end{figure}

First, we denote the set of estimated system states for subpopulation $v_i$ at time $k$ as $\widehat{\Theta}_i(k)=\big(\widehat{s_i}(k),\widehat{x_i}(k),\widehat{r_i}(k)\big)$, we denote the set of testing data recorded at time $k$ to be $\Omega_i(k)= \big(C_i(k), D_i(k)\big)$, where $C_i(k)$ is the number of confirmed cases at time $k$, and $D_i(k)$ represents the number of removed (recovered) cases at time $k$. In addition, the cumulative number of confirmed and removed cases at node $v_i$ are written as $\mathbb{C}_i(k) = \sum_{j=0}^k C_i(j)$ and $\mathbb{D}_i(k) = \sum_{j=0}^k D_i(j)$, respectively. Therefore, the number of active cases is calculated by $\mathbb{A}_i(k) =\mathbb{C}_i(k) - \mathbb{D}_i(k)
$.
Recall that the size of each subpopulation is $N_i$; we define $c_i(k)= \frac{C_i(k)}{N_i}$ and $d_i(k)= \frac{D_i(k)}{N_i}$ as the proportion of daily confirmed cases and removal, respectively. Note that $c_i(k), d_i(k) \in [0,1]$, for all $i\in[n], k\in \mathbb{Z}_{\geq 0}$.
The estimation procedure is illustrated in Fig.~\ref{fig:Diagram_Inference}.

We then study how to relate $c_i(k)$ to the underlying states. We define a vector space $\Pi_{T_1}$ 
as the space of all the proportions of daily number of confirmed cases from time step $k=T_1$ to time step $k=T_2+1$. 
We define $\Xi_{T_1}$ as the vector of all the decreases in the proportion of susceptible individuals, $-\Delta s_i(k)$, from time step $k=T_1$ to time step $k=T_2+1$.
We denote $\Phi(T_1, T_2)$ as the transfer matrix which results in
\begin{equation}\label{eq:transfer_matrix}
    \Pi_{T_1} = \Phi(T_1, T_2) \Xi_{T_1},
\end{equation}
where $\Phi(T_1, T_2)$ is a $(T_2-T_1+2) \times (T_2-T_1+2)$ matrix, which depends on the SIR dynamics, the testing strategies, and the delay.


When the testing delay is a constant, i.e., $\tau_i =\eta_i > 0$, the only non-zero entries in $\Phi(T_1, T_2)$ are: $ \Phi_{l+\eta_i, l} =1, l\in [T_2-T_1+2-\eta_i]$.
Since for all $k \in [T_1+ \eta_i,T_2+1 ]$, we can write $c_i(k)$ as
   \begin{equation}
       c_i(k) = -\Delta s_i(k-\eta_i).
   \end{equation}
When the delay $\eta_i =0$, the transfer matrix $\Phi(T_1, T_2)=I$. 
Because for all $k \in [T_1,T_2+1 ]$, we can write that
      \begin{equation}
       c_i(k) = -\Delta s_i(k).
   \end{equation}

We now propose a stochastic testing framework to capture the delay between when an individual is infected and when they receive a positive test result. We first let $\eta_i =0$ in~\eqref{eq:testing_delay}, without the loss of generality. Furthermore, we assume that each infected individual at node $v_i$ has an equal probability $p_i^x \in (0,1]$ 
of receiving a diagnostic test each day starting from the day after they are infected. 
Therefore, we model $\mathcal{Y}_i$ in~\eqref{eq:testing_delay} as a random variable following the geometric distribution, 
with the probability of an infected individual acquiring a positive test $\delta$ days after infection being:
\begin{equation}\label{eq:geo_dist_single}
    P(\mathcal{Y}_i =\delta) =p_i^x(1-p_i^x)^{\delta -1}
\end{equation}
for $\delta \in \mathbb{Z}_{\geq 1}$.
The geometric distribution of the testing delay models the number of days before an infected individual obtains a diagnostic test which represents the incubation period of COVID-19 and/or the unwillingness of each individual getting a test. We assume that the delay of each infected individual's distribution is i.i.d. (independent and identically distributed) from others. Furthermore, we assume that an infected individual can be tested only once. Based on~\cite{an2020clinical} and~\cite{abbasi2020promise}, we assume that even if an individual recovers from COVID-19, their antibody tests will still give positive results. We also assume that all the tests generate accurate results. 

We now relate the proportion of confirmed cases $c_i(k)$ with the underlying states of the system. We define a binary random variable $\mathcal{X}_i(\nu)$ with $\mathcal{X}_i(\nu )=1$ (resp. $\mathcal{X}_i(\nu) =0$) if a randomly chosen individual from subpopulation $v_i$ became infected at time $\nu$. It can be written that 
\begin{equation} \label{eq:infection_nu}
\mathcal{X}_i (\nu) = \begin{cases} 
1 \ \  w.p. \ \ -\Delta s_i(\nu) \\
0 \ \  w.p. \ \ 1+\Delta s_i(\nu),
\end{cases}
\end{equation}where, from~\eqref{eq:dt_SIR}, $-\Delta s_i(\nu) =s_i(\nu-1)-s_i(\nu)=hs_i(\nu-1)\sum_{j}\beta_{ij}x_j(\nu-1)\geq 0$ for all $\nu \geq 0$.

We define the binary random variable $\mathcal{T}_i(\mu,\delta)$, with $\mathcal{T}_i(\mu,\delta)=1$ 
if a randomly chosen individual acquired a positive test at time $\mu$ and was infected  $\delta$ days before $\mu$.
Now we rewrite $\nu $, in \eqref{eq:infection_nu}, as $\mu - \delta$. 
From~\eqref{eq:geo_dist_single}, the conditional probability $P(\mathcal{T}_i(\mu,\delta)=1 |\mathcal{X}_i(\mu - \delta)=1)$
is given by the geometric probability mass function (pmf) $p_i^x(1-p_i^x)^{\delta-1}$ and represents the probability of an infected individual acquiring a positive test specifically $\delta$ days after being infected. 
Hence, the joint pmf of the two random variables $\mathcal{X}_i(\mu -\delta) , \mathcal{T}_i(\mu,\delta) $ is written as:
\begin{equation}
    P_{\mathcal{X}_i,\mathcal{T}_i}(\mu -\delta,\mu)  = P(\mathcal{X}_i(\mu -\delta) 
    \cap 
    \mathcal{T}_i(\mu,\delta)),
\end{equation}
which is interpreted as the probability that
a randomly chosen individual 
became 
infected 
at time $\mu -\delta$ and 
acquired 
a positive test at time $\mu$, where $\mu -\delta , \mu \in [T_1, T_2]$.


Therefore, the joint pmf $P_{\mathcal{X}_i,\mathcal{T}_i}(\mu-\delta,\mu)$ is calculated as:
\begin{align}
    & P_{\mathcal{X}_i, \mathcal{T}_i}  (\mathcal{X}_i(\mu-\delta)=1 \cap \mathcal{T}_i(\mu,\delta)=1) \nonumber  \\ &= P(\mathcal{T}_i(\mu,\delta)=1 |\mathcal{X}_i(\mu - \delta)=1)P(\mathcal{X}_i(\mu -\delta)=1) \nonumber  \\ &= p_i^x(1-p_i^x)^{\delta-1}[-\Delta s_i(\mu- \delta)], \label{eq:joint_pmf}
\end{align}
\begin{align}
    & P_{\mathcal{X}_i, \mathcal{T}_i}  (\mathcal{X}_i(\mu-\delta)=0 \cap \mathcal{T}_i(\mu,\delta)=1) \nonumber  \\ &= P(\mathcal{T}_i(\mu,\delta)=1 |\mathcal{X}_i(\mu - \delta)=0)P(\mathcal{X}_i(\mu -\delta)=0) \nonumber  \\ &=0[1+\Delta s_i(\mu-\delta)]= 0, \label{eq:zero}
\end{align}
since we assume that the test results are accurate. Similarly,
\begin{align}
    & P_{\mathcal{X}_i, \mathcal{T}_i}  (\mathcal{X}_i(\mu-\delta)=1 \cap \mathcal{T}_i(\mu,\delta)=0) \nonumber  \\ &= P(\mathcal{T}_i(\mu,\delta)=0 |\mathcal{X}_i(\mu - \delta)=1)P(\mathcal{X}_i(\mu -\delta)=1) \nonumber  \\ &= [1-p_i^x(1-p_i^x)^{\delta-1}][-\Delta s_i(\mu- \delta)], \nonumber 
\end{align}
\begin{align}
    & P_{\mathcal{X}_i, \mathcal{T}_i}  (\mathcal{X}_i(\mu-\delta)=0 \cap \mathcal{T}_i(\mu,\delta)=0) \nonumber  \\ &= P(\mathcal{T}_i(\mu,\delta)=0 |\mathcal{X}_i(\mu - \delta)=0)P(\mathcal{X}_i(\mu -\delta)=0) \nonumber  \\ &= 1+\Delta s_i(\mu- \delta). \nonumber
\end{align}

Let $\mathcal{W}_i(\mu)$ be the marginal distribution of $\mathcal{T}_i(\mu,\delta)$ over the set of  feasible delays, $\delta$, with its pmf being 
the 
probability of a random
individual acquiring
a positive test at time $\mu$: 
\begin{align}
    &P_{\mathcal{W}_i}(\mathcal{W}_i(\mu)=1) \nonumber \\
    & =\sum_{\delta =1}^{\mu -T_1} P_{\mathcal{X}_i,\mathcal{T}_i}(\mathcal{X}_i(\mu-\delta) \cap \mathcal{T}_i(\mu,\delta)=1) \nonumber \\
    & =  \sum_{\delta =1}^{\mu-T_1} p_i^x (1-p_i^x)^{\delta-1}[ -\Delta s_i(\mu-\delta)], \nonumber 
\end{align}
by combining~\eqref{eq:joint_pmf} and~\eqref{eq:zero}.
Therefore, the number of confirmed cases at time $k$ is calculated as \begin{align}
    C_i(k) &=  
    \E[\sum_{l=1}^{N_i} \mathcal{W}_i(k) ] \nonumber \\
    &=\sum_{l=1}^{N_i} \E[ \mathcal{W}_i(k) ] \label{eq:indicator} \\
    & =N_i \sum_{\delta=1}^{k-T_1} p_i^x (1-p_i^x)^{\delta-1}[ -\Delta s_i(k-\delta)], \label{eq:C_i_k}
\end{align}
where~\eqref{eq:indicator} holds because of the linearity of expectation and since the testing delays are i.i.d. 
Hence, by combining $c_i(k) =\frac{C_i(k)}{N_i}$ and~\eqref{eq:C_i_k} for all $k\in [T_1+1,T_2+1]$, the transfer matrix $\Phi(T_1, T_2)$ 
in~\eqref{eq:transfer_matrix} is written as 
\begin{align}
   &\Phi(T_1, T_2) \nonumber \\ &= \begin{bmatrix}
0 & 0  & 0 & 0 &\hdots & 0    \\
p_i^x  & 0  & 0 & 0 & \hdots & 0    \\
p_i^x (1-p_i^x) & p_i^x  & 0 & 0 & \hdots & 0   \\
p_i^x (1-p_i^x)^2 &p_i^x (1-p_i^x) &  p_i^x & 0 & \hdots & 0   \\
\vdots & \vdots & \ddots & \ddots & \ddots &\vdots \\
p_i^x(1-p_i^x)^{q-2} & p_i^x(1-p_i^x)^{q-3} &\hdots &\hdots & p_i^x & 0
\end{bmatrix},\label{eq:transfer_geo}
\end{align}
where $q = T_2-T_1+1$.
By combining~\eqref{eq:transfer_matrix},~\eqref{eq:transfer_geo}, 
we obtain that
\begin{equation}\label{eq:model_daily}
    c_i(k) = p_i^x (-\Delta s_i(k-1))+ (1-p_i^x)c_i(k-1),
\end{equation}
for all $k \in [T_1+1,T_2+1]$. Meanwhile, we set $c_i(k) =0$ for all $k \notin [T_1+1, T_2+1]$, since no testing occurs.
\begin{remark}
The proportion of daily confirmed cases $c_i(k)$ in \eqref{eq:model_daily} consists of two terms: the first term $p_i^x (-\Delta s_i(k-1))$ can be interpreted as an infected individual's urgency in obtaining a test, and the second term $(1-p_i^x)c_i(k-1)$ captures the unwillingness/unlikeliness of an infected individual acquiring a test.
\end{remark}



Finally, we relate the proportion of the daily number of recoveries, i.e., $d_i(k)$, with the underlying states. In the data collected, $d_i(k)$ corresponds to the change in the proportion of recovered individuals and the total number of known active cases $\mathbb{A}_i(k-1)$. We assume
\begin{equation}\label{eq:recovered_data}
    d_i(k) \sim  \mathtt{Bin}\Big(\frac{\mathbb{A}_i(k-1)}{N_i}, h\gamma_i(k-1)\Big).
\end{equation}
Namely,  each known active case recovers with healing rate $h \gamma_i(k-1)$. From~\cite{hota2020closed}, when the number of active
cases is large, $d_i(k)$ is approximately equal to $\frac{h\gamma_i(k-1) \mathbb{A}_i(k-1)}{N_i}$.

The above analysis links the collected data proportions with the underlying states of the system. If we acquire the parameter: $p_i^x$, 
we will be able to estimate the state systems as follows:


\begin{definition}\label{def:initial}
We assume that: $\widehat{x_i}(k) =\widehat{x_i}(0)$, $\widehat{r_i}(k) =\widehat{r_i}(0)$, and $\widehat{s_i}(k)=\widehat{s_i}(0)$, where $\widehat{x_i}(0), \widehat{r_i}(0), \widehat{s_i}(0) \in [0,1]$ for all $i\in [n], k< T_1$. Given the testing data set $\Omega_i(k)$ collected from time step $T_1+1$ to $T_2+1$, according to~\eqref{eq:model_daily}, we define the estimated proportion of new infections at node $v_i$ as
\begin{equation}\label{eq:infer_sus_geo}
    -\widehat{\Delta s}_i(k) =\frac{c_i(k+1)-(1-p_i^x)c_i(k)}{p_i^x } , k\in [T_1,T_2].
\end{equation}
\end{definition}
Notice that when $p_i^x =1$,~\eqref{eq:infer_sus_geo} becomes: 
\begin{equation*}
    -\widehat{\Delta s}_i(k) = c_i(k+1), k\in [T_1,T_2],
\end{equation*}
which can be interpreted as: every infected individual will be tested the day after being infected. Hence, the estimated change in proportion of infection on a given day $k$ exactly equals to the fraction of the number of positive cases on the next day $k+1$.

Moreover, we let $\widehat{\Delta r}_i(k) =0$ for $k =T_1$. Note that the following equality holds from the formulation of the SIR model:
\begin{align*}
    \widehat{\Delta s}_i(k) + \widehat{\Delta x}_i(k) + \widehat{\Delta r}_i(k) = 0.  
\end{align*}
We further define that 
\begin{IEEEeqnarray}{CC}
\IEEEyesnumber\label{eq:infer_SIR_st} 
  \widehat{s_i}(k) &= \widehat{s_i}(k-1)+ \widehat{\Delta s}_i(k), \nonumber \\
\widehat{x_i}(k) &= \widehat{x_i}(k-1)+ \widehat{\Delta x}_i(k), 
\\
\widehat{r_i}(k) &= \widehat{r_i}(k-1)+ \widehat{\Delta r}_i(k),\nonumber  \label{eq:infer_SIRsub3}
\end{IEEEeqnarray}
for $k \in [T_1,T_2]$.

According to~\eqref{eq:recovered_data} and~\cite{hota2020closed}, the change in the proportion of recovered individuals at node $v_i$ can be inferred as
\begin{equation}\label{eq:infer_recovered_change}
    \widehat{\Delta r}_i(k) = \frac{N_id_i(k)}{\mathbb{A}_i(k-1)}\widehat{x_i}(k-1), k \in [T_1+ 1, T_2],
\end{equation}
where $\widehat{x_i}(k-1)$ is calculated from \eqref{eq:infer_sus_geo} and~\eqref{eq:infer_SIR_st}. When $\mathbb{A}_i(k-1)=0$, we assume $\widehat{\Delta r}_i(k) =0$.

Therefore, if the testing data $\Omega_i(k)$ is available over an interval $k\in [T_1+1, T_2+1]$, we can estimate the states of the system by repetitively applying~\eqref{eq:infer_sus_geo},~\eqref{eq:infer_SIR_st}, and~\eqref{eq:infer_recovered_change} with the initial conditions, i.e., $\widehat{s_i}(0),$ $\widehat{x_i}(0),$ and $ \widehat{r_i}(0)$, assumed for the geometric distribution model. This addresses question~\ref{itm:second_a} in Section~\ref{sec:model}. 

\begin{assumption}\label{assum:inf}
We assume that $c_i(k) =0$ for all $k \in [T_1]\cup \{0\}$
and the initial inferred susceptible proportion is $\widehat{s_i}(0)$.
\end{assumption}
\begin{remark}
When estimating the system states, we first assume an initial condition for the system based on reality. We also assume that outside of the testing period, the proportion of positive cases collected is zero.
\end{remark}

\begin{proposition}\label{prop:estimation_error}
Under Assumption~\ref{assum:inf}, the error of the inference method in~\eqref{eq:infer_sus_geo}-
\eqref{eq:infer_recovered_change} 
at time $k$
is given by
\begin{equation}\label{eq:infer_error_bound}
    | \widehat{s_i}(k) -s_i(k) | = \left|\widehat{s_i}(0)-s_i(0)-\sum_{l=1}^{T_1-1}\Delta s_i(l)  
    \right|, 
\end{equation}
for all $k \geq T_1$.
\end{proposition}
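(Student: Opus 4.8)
The plan is to show that, once testing begins, the inferred susceptible increments coincide \emph{exactly} with the true increments, so that the only error is the mismatch accumulated before $T_1$ together with the error in the initial guess. First I would substitute the model for the observed data~\eqref{eq:model_daily} into the inference formula~\eqref{eq:infer_sus_geo}. Writing~\eqref{eq:model_daily} at time $k+1$ gives $c_i(k+1)=p_i^x(-\Delta s_i(k))+(1-p_i^x)c_i(k)$, and substituting this into $-\widehat{\Delta s}_i(k)=\frac{c_i(k+1)-(1-p_i^x)c_i(k)}{p_i^x}$ cancels the $(1-p_i^x)c_i(k)$ terms and leaves $-\widehat{\Delta s}_i(k)=-\Delta s_i(k)$. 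Hence $\widehat{\Delta s}_i(k)=\Delta s_i(k)$ for every $k\in[T_1,T_2]$; the estimator reconstructs the true daily change in the susceptible proportion without error on the testing window.

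With this identity in hand, I would telescope the two state recursions. From Definition~\ref{def:initial} the estimate is frozen at $\widehat{s_i}(T_1-1)=\widehat{s_i}(0)$, so summing~\eqref{eq:infer_SIR_st} and using the identity gives $\widehat{s_i}(k)=\widehat{s_i}(0)+\sum_{l=T_1}^{k}\Delta s_i(l)$. The true trajectory telescopes to $s_i(k)=s_i(0)+\sum_{l=1}^{k}\Delta s_i(l)$ directly from the definition of $\Delta s_i$. Subtracting and splitting the second sum at $T_1$, i.e. $\sum_{l=1}^{k}=\sum_{l=1}^{T_1-1}+\sum_{l=T_1}^{k}$, makes the common tail $\sum_{l=T_1}^{k}\Delta s_i(l)$ cancel, yielding $\widehat{s_i}(k)-s_i(k)=\widehat{s_i}(0)-s_i(0)-\sum_{l=1}^{T_1-1}\Delta s_i(l)$. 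Taking absolute values gives~\eqref{eq:infer_error_bound}.

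The point worth emphasizing --- and the step I expect to require the most care --- is the index bookkeeping at the boundary $k=T_1$ and the claim that the error is the \emph{same} constant for all $k\geq T_1$. Here Assumption~\ref{assum:inf}, which sets $c_i(k)=0$ for $k\in[T_1]\cup\{0\}$, is what makes the model relation consistent at the left endpoint and guarantees that the estimator starts from the frozen value $\widehat{s_i}(0)$; I would verify that the $k=T_1$ term of the telescoped sum is correctly $\widehat{\Delta s}_i(T_1)=c_i(T_1+1)/p_i^x=\Delta s_i(T_1)$. Because the inferred and true increments agree at every step after $T_1$, the right-hand side of~\eqref{eq:infer_error_bound} carries no dependence on $k$, so the error is locked in at the instant testing begins and neither grows nor decays thereafter; I would note that the two genuine error sources are the initial-condition mismatch $\widehat{s_i}(0)-s_i(0)$ and the unobserved pre-testing dynamics $\sum_{l=1}^{T_1-1}\Delta s_i(l)$.
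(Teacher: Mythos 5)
Your proof is correct and follows essentially the same route as the paper's: both telescope the true and inferred recursions to $s_i(k)=s_i(0)+\sum_{l=1}^{k}\Delta s_i(l)$ and $\widehat{s_i}(k)=\widehat{s_i}(0)+\sum_{l=T_1}^{k}\Delta s_i(l)$ and then subtract. Your key step---substituting \eqref{eq:model_daily} into \eqref{eq:infer_sus_geo} to obtain the pointwise identity $\widehat{\Delta s}_i(k)=\Delta s_i(k)$ on $[T_1,T_2]$---is a cleaner shortcut than the paper's detour through summing the $c_i(l)$ terms in \eqref{eq:sum_c_i(j)}--\eqref{eq:sum_c_i_clearversion} and converting back, but it lands on the identical intermediate identity \eqref{eq:s_hat_2} and the same conclusion.
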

\begin{proof}
From~\eqref{eq:dt_SIRsub1}, we first represent $s_i(k)$ by:
\begin{equation}\label{eq:sus_true_sum}
    s_i(k) = s_i(0) + \sum_{l=1}^k \Delta s_i(l).
\end{equation}
Now, we characterize $\widehat{s_i}(k)$:
\begin{align}
    \widehat{s_i}(k) &= \widehat{s_i}(0) + \sum_{l=1}^k \widehat{\Delta s}_i(l) \nonumber  \\ \label{eq:sum_hat_sus_2}
    &= \widehat{s_i}(0) + \sum_{l=T_1}^k \widehat{\Delta s}_i(l)   \\\label{eq:sum_hat_sus_3}
    & = \widehat{s_i}(0) - \frac{c_i(k+1)}{p_i^x}-\sum_{l=T_1+1}^k c_i(l),
\end{align}
where~\eqref{eq:sum_hat_sus_2} is written because $-\widehat{\Delta s}_i(l)=0$ for all $l\leq T_1-1$ in~\eqref{eq:infer_sus_geo}. We acquired~\eqref{eq:sum_hat_sus_3} through representing each $\widehat{\Delta s}_i(l)$, $l\geq T_1$ by~\eqref{eq:infer_sus_geo} and following Assumption~\ref{assum:inf}.   
By applying~\eqref{eq:model_daily}, we calculate the $\sum_{l=T_1+1}^k c_i(l)$ on the R.H.S. of~\eqref{eq:sum_hat_sus_3} as
\begin{align}\label{eq:sum_c_i(j)}
    &\sum_{l=T_1+1}^k c_i(l) =-p_i^x \sum_{l=T_1}^{k-1} \Delta s_i(l) +(1-p_i^x)\sum_{l=T_1+1}^{k-1} c_i(l) \\\label{eq:sum_c_i(j)_2}
    & \ \   \ \ =-p_i^x \sum_{l=T_1}^{k-1} \Delta s_i(l) +  (1-p_i^x) \Big[\sum_{l=T_1+1}^k c_i(l) -c_i(k) \Big],
\end{align}
since $\sum_{l=T_1+1}^{k-1} c_i(l)= \sum_{l=T_1+1}^k c_i(l) -c_i(k)$.
We can 
reorganize~\eqref{eq:sum_c_i(j)_2} and acquire:
\begin{equation}\label{eq:sum_c_i_clearversion}
    \sum_{l=T_1+1}^k c_i(l) = -\sum_{l=T_1}^{k-1} \Delta s_i(l) - \frac{1-p_i^x}{p_i^x} c_i(k).
\end{equation}
Hence, we replace $\sum_{l=T_1+1}^k c_i(l)$ on the R.H.S. of~\eqref{eq:sum_hat_sus_3} with~\eqref{eq:sum_c_i_clearversion} and obtain:
\begin{align}
    \widehat{s_i}(k) & = \widehat{s_i}(0) - \frac{c_i(k+1)}{p_i^x}+\sum_{l=T_1}^{k-1} \Delta s_i(l) 
    + \frac{1-p_i^x}{p_i^x} c_i(k)  \label{eq:s_hat_1}  \\ \label{eq:s_hat_2}
    &= \widehat{s_i}(0) + \sum_{l=T_1}^{k} \Delta s_i(l),
\end{align}
where~\eqref{eq:s_hat_2} 
follows from
writing $c_i(k+1)$ in~\eqref{eq:s_hat_1} 
as $p_i^x (-\Delta s_i(k))+(1-p_i^x)c_i(k)$, using~\eqref{eq:model_daily}. Therefore, we can calculate $  | \widehat{s_i}(k) -s_i(k) |$ by comparing~\eqref{eq:sus_true_sum} with~\eqref{eq:s_hat_2} and yield the result. 
\end{proof}
Prop.~\ref{prop:estimation_error} provides an analytical expression of the estimation error given the initial susceptible level assumed and the start testing time. Hence, Prop.~\ref{prop:estimation_error} solves question~\ref{itm:second_b} in Section~\ref{sec:model}.
\begin{corollary}\label{coro:est_error}
In Prop.~\ref{prop:estimation_error}, if we assume that $\widehat{s_i}(0)=1$, then we can write that $\widehat{s_i}(k)\geq s_i(k)$, for all $k \geq T_1$. Moreover, if we assume the inferred initial conditions 
correctly, i.e., $\widehat{s_i}(0)=s_i(0)$, and $T_1= 1$, then 
the  algorithm will estimate the susceptible state perfectly.
\end{corollary}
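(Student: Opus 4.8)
The plan is to specialize the exact error identity established in the proof of Prop.~\ref{prop:estimation_error}, so no new machinery is required. Comparing the true trajectory~\eqref{eq:sus_true_sum} with the inferred trajectory~\eqref{eq:s_hat_2} yields the signed form of the error,
\begin{equation*}
\widehat{s_i}(k) - s_i(k) = \bigl(\widehat{s_i}(0) - s_i(0)\bigr) - \sum_{l=1}^{T_1-1} \Delta s_i(l),
\end{equation*}
valid for every $k \geq T_1$. Both assertions of the corollary follow by reading off this expression under the respective hypotheses.

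For the first claim I would set $\widehat{s_i}(0) = 1$ and show that each of the two terms on the right is nonnegative. Since $s_i(0) \in [0,1]$ by Lemma~\ref{lemma:one}, the first term satisfies $1 - s_i(0) \geq 0$. For the summation, part~3 of Lemma~\ref{lemma:one} ($s_i(k+1) \leq s_i(k)$) gives $\Delta s_i(l) = s_i(l) - s_i(l-1) \leq 0$ for every $l$, so $-\sum_{l=1}^{T_1-1} \Delta s_i(l) \geq 0$. Summing the two nonnegative contributions gives $\widehat{s_i}(k) - s_i(k) \geq 0$, i.e.\ $\widehat{s_i}(k) \geq s_i(k)$ for all $k \geq T_1$.

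For the second claim I would substitute $\widehat{s_i}(0) = s_i(0)$, which annihilates the first term, together with $T_1 = 1$, which renders $\sum_{l=1}^{T_1-1} \Delta s_i(l) = \sum_{l=1}^{0} \Delta s_i(l)$ an empty sum equal to zero. The error identity then collapses to $\widehat{s_i}(k) - s_i(k) = 0$ for all $k \geq 1$, establishing exact recovery of the susceptible trajectory.

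The analytical content here is light: the corollary is essentially a direct reading of the proposition's error formula, so there is no substantive obstacle to surmount. The one place demanding care is the sign bookkeeping—correctly invoking Lemma~\ref{lemma:one} to conclude both $\Delta s_i(l) \leq 0$ and $s_i(0) \leq 1$, since a sign slip would reverse the inequality in the first claim. I would therefore record the monotonicity and boundedness facts explicitly before combining the two terms.
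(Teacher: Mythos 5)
Your proof is correct and is exactly the intended argument: the paper states Corollary~\ref{coro:est_error} without proof as an immediate consequence of the signed error identity underlying Prop.~\ref{prop:estimation_error}, and your specialization of that identity—using $s_i(0)\in[0,1]$ and the monotonicity $s_i(k+1)\leq s_i(k)$ from Lemma~\ref{lemma:one} for the first claim, and the vanishing of both terms for the second—is precisely the reading the authors rely on. No gaps; the sign bookkeeping you flag is handled correctly.
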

\begin{remark}
The result of Prop.~\ref{prop:estimation_error} consists of two parts: $\widehat{s_i}(0)-s_i(0)$ and $-\sum_{l=1}^{T_1-1}\Delta s_i(l)$. The first component depends on the difference between the inferred initial susceptible level and true initial susceptible level. The second component depends on the start testing date. Therefore, the accuracy of the estimation algorithm corresponds to the estimated initial condition and how early the testing data is collected. 
\end{remark}
\noindent
We will explore this error via simulations in Section \ref{sec:simulation}.

By estimating the proportion of infected individuals in a subpopulation of a network, we are able to acquire the estimation of the infection prevalence in the whole system. These inferred states provide an understanding of the epidemic and important factors for designing eradication schemes for infectious diseases.

\section{Distributed Eradication Strategy}\label{sec:eradication_strategies}
In this section, we propose two distributed strategies that employ the true states and the estimated states, respectively, and guarantee the eradication of the virus in at least exponential time. 





We propose the following healing rate to control the epidemic spread over the network:
\begin{equation}\label{eq:control_scheme_healing_rate}
    \widetilde{\gamma}_i(k) =s_i(k) \sum_{j=1}^n \beta_{ij}(k) +\epsilon_i,  \;\ i \in [n], 
\end{equation}
where 
$\epsilon_i>0$, for each $i \in [n]$.
This algorithm can be understood as boosting the healing rate of each subpopulation separately by providing effective medication, medical supplies, and/or healthcare workers.
\begin{theorem}\label{thm:control_datadriven}
Consider the system in~\eqref{eq:dt_SIR} and assume that
\begin{enumerate}
    \item $0\leq h \sum_j \beta_{ij}(k) < 1$, 
    $\forall i \in [n]$ and $\forall k \in  \mathbb{ Z}_{\geq 0}$,
    \item $B(k)$ is symmetric and irreducible $\forall k \in  \mathbb{ Z}_{\geq 0}$,
    \item $\exists \epsilon_i$ small enough that $h\widetilde{\gamma}_i(k) <1$, 
    $\forall i \in [n],
    k \in  \mathbb{ Z}_{\geq 0}$.
\end{enumerate}
Then the algorithm~\eqref{eq:control_scheme_healing_rate}  guarantees GES of the set of healthy states and $x(k)$ converges to $\mathbf{0}$ with at least an exponential rate.
\end{theorem}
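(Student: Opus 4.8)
The plan is to substitute the control law~\eqref{eq:control_scheme_healing_rate} into the infected dynamics~\eqref{eq:SIR_dynamic} and show that the resulting closed-loop transition matrix is nonnegative with row sums that are \emph{uniformly} strictly below $1$, which immediately yields a uniform contraction and hence GES. Writing $\Gamma(k) = \mathrm{diag}(\widetilde{\gamma}_i(k))$ with $\widetilde{\gamma}_i(k) = s_i(k)\sum_{j}\beta_{ij}(k) + \epsilon_i$, the closed loop reads $x(k+1) = \hat{M}(k)x(k)$ with $\hat{M}(k) = I + h[S(k)B(k) - \Gamma(k)]$. First I would record the entries of $\hat{M}(k)$: the off-diagonal $(i,j)$ entry is $h\,s_i(k)\beta_{ij}(k) \geq 0$ by Assumption~\ref{assume:one} and Lemma~\ref{lemma:one}, while the diagonal entry simplifies to $1 - h\,s_i(k)\sum_{j\neq i}\beta_{ij}(k) - h\epsilon_i$, which is nonnegative precisely because the third hypothesis forces $h\widetilde{\gamma}_i(k) < 1$. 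Hence $\hat{M}(k)$ is a nonnegative matrix for every $k$.

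The key structural observation is that the control is designed so that the infection term $s_i(k)\sum_j\beta_{ij}(k)$ cancels exactly against the healing term in each row. Summing row $i$ gives $\sum_j \hat{M}_{ij}(k) = 1 + h\,s_i(k)\sum_j\beta_{ij}(k) - h\widetilde{\gamma}_i(k) = 1 - h\epsilon_i$. Because $\hat{M}(k)$ is nonnegative, its spectral radius is at most its maximum row sum, so $\sup_{k}\rho(\hat{M}(k)) \leq 1 - h\min_i\epsilon_i < 1$, uniformly in $k$. This is exactly the bound the control is engineered to produce, and it plays the role of the hypothesis $\sup_k\rho(M(k))<1$ in Theorem~\ref{thm:GES}. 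I would stress that one cannot simply invoke Theorem~\ref{thm:GES} verbatim: the matrix $M(k) = I - h\Gamma(k) + hB(k)$ associated with the controlled rates has row sums $1 + h(1-s_i(k))\sum_j\beta_{ij}(k) - h\epsilon_i$, which may exceed $1$, so $\rho(M(k))$ need not be below $1$. The argument must therefore run directly on $\hat{M}(k)$.

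To pass from the pointwise spectral bound to GES of the time-varying recursion, I would use the uniform row-sum bound rather than a pointwise eigenvalue argument. Since the induced norm satisfies $\|\hat{M}(k)\|_\infty = \max_i(1 - h\epsilon_i) = 1 - h\min_i\epsilon_i =: \omega < 1$, we get the contraction $\|x(k+1)\|_\infty \leq \omega\,\|x(k)\|_\infty$. Taking $V(k,x) = \|x\|_\infty^2$ fits the template of Lemma~\ref{lemma:GES} with $p=2$, $a = 1/n$, $b = 1$, and $\Delta V \leq -(1-\omega^2)\|x\|_\infty^2 \leq -c\|x\|^2$ with $c = (1-\omega^2)/n > 0$, which certifies GES per Definition~\ref{def:GES} after invoking equivalence of norms. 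The symmetry of $B(k)$ in the second hypothesis is what lets one upgrade this to a genuine quadratic certificate: the diagonal similarity by $S(k)^{1/2}$ makes $S(k)^{-1/2}\hat{M}(k)S(k)^{1/2} = I + h\,S(k)^{1/2}B(k)S(k)^{1/2} - h\Gamma(k)$ symmetric, so the eigenvalues are real and one can construct a Lyapunov function $x^\top Q(k) x$ aligned with Corollary~\ref{coro:rate_exp}; irreducibility then secures the Perron eigenstructure used in Lemma~\ref{lemma:diagonal}.

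The main obstacle, and the step I would be most careful about, is the time-varying nature of the dynamics: showing $\rho(\hat{M}(k)) < 1$ for each fixed $k$ does not by itself imply exponential stability of the time-varying iteration. The resolution is that the control forces every row sum to the time-\emph{independent} value $1 - h\epsilon_i$, so the contraction factor $\omega$ is uniform over $k$; this uniformity is exactly what makes the sup-norm (or its induced quadratic form) a valid common Lyapunov function across all $k$. I would also verify the boundary behavior when some $s_i(k)=0$, where the symmetrizing similarity degenerates but the nonnegativity-and-row-sum argument carries through unchanged.
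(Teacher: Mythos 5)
Your proposal is correct and follows essentially the same route as the paper: substitute the control law, observe that the closed-loop transition matrix is nonnegative with every row summing to exactly $1-h\epsilon_i$, and conclude a uniform contraction with factor $1-h\min_i\{\epsilon_i\}$ and hence GES. The only real difference is cosmetic---where the paper invokes the Gershgorin circle theorem and then asserts the norm contraction, you use the induced $\infty$-norm (maximum row sum) directly, which is arguably the cleaner way to justify the step from the row-sum bound to $\lVert x(k+1)\rVert \le \omega\,\lVert x(k)\rVert$ for the time-varying iteration; your auxiliary symmetrization remarks are not needed for the argument.
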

\begin{proof}
By substituting \eqref{eq:control_scheme_healing_rate} into \eqref{eq:dt_SIR}, we obtain
\begin{align}\label{eq:infected_control}
    &x_i(k+1) = x_i(k) + \nonumber \\ & h\{s_i(k)\sum_{j=1}^n \beta_{ij}(k) x_j(k) - [s_i(k) \sum_{j=1}^n \beta_{ij}(k) +\epsilon_i]x_i(k)\}.
\end{align}
The state transition matrix of \eqref{eq:infected_control} can be written as 
\begin{equation}
    \widetilde{M}(k) = I+h[S(k)B(k) -(S(k)\text{diag}(B(k) \mathbf{1}_{n\times 1}) + \text{diag}(\epsilon_i))].
\end{equation}
For any $i
, j\in [n], j\neq i$, the entries of the $i$-th row of $\widetilde{M}(k)$ are 
\begin{equation}
    \widetilde{m}_{ii}(k) = 1-h[s_i(k)\sum\limits_{j\neq i}^n \beta_{ij}(k)+\epsilon_i],
\end{equation}
\begin{equation}
    \widetilde{m}_{ij}(k) = hs_i(k)\beta_{i,j}(k) ,
\end{equation}
which satisfies the following inequality
\begin{equation}
    \widetilde{m}_{ii}(k)+ \sum\limits_{j\neq i}^n\widetilde{m}_{ij}(k) \leq 1- h\min \{\epsilon_i\}, \forall i \in n.
\end{equation}
Therefore, by Gershgorin circle theorem, the spectral radius of $\widetilde{M}(k) $ is upper bounded by $1- h\min \{\epsilon_i\}$:
\begin{equation}
    \rho(\widetilde{M}(k)) \leq 1- h\min \{\epsilon_i\}. 
\end{equation}
Since we have $x(k+1) = \widetilde{M}(k) x(k)$ and $x(k) \geq 0$ for all $k$, we can write that $\lVert x(k+1) \rVert \leq[1- h\min \{\epsilon_i\}] \lVert x(k) \rVert$ for all $k$. Since $\epsilon_i >0, \forall i\in n$, we obtain that, for all $x_i(0)\in [0,1]^n$,
\begin{equation}\label{eq:converge}
    \lVert x(k) \rVert \leq [1- h\min \{\epsilon_i\}]^k \lVert x(0) \rVert  \leq e^{-k h\min \{\epsilon_i\}} \lVert x(0) \rVert,
\end{equation}
where the second inequality holds by
Bernoulli's inequality~\cite{carothers2000real}, \begin{equation}\label{eq:bernoulli}
    e^x= \lim_{n\rightarrow \infty}(1+\frac{x}{n})^n \geq 1+x.
\end{equation}
Hence, $x(k)$ converges to $\mathbf{0}$ with an exponential rate of at least $h\min \{\epsilon_i\}$.
Therefore, 
the set of healthy states is GES.
\end{proof}

\begin{remark}
The control strategy proposed in Theorem~\ref{thm:control_datadriven} can be interpreted as follows: if the healing rate of each subpopulation is appropriately increased according to its susceptible proportion, for example by distributing effective medication, medical supplies, and/or healthcare workers to each subpopulation, 
then the epidemic will be eradicated with at least an exponential rate. This theorem provides decision makers 
insight into, given sufficient resources, how to allocate 
medical supplies and healthcare workers
to different subpopulations
so that the epidemic can be eradicated quickly. Furthermore, Theorem~\ref{thm:control_datadriven} provides sufficient conditions for guaranteeing an exponentially decreasing $\lVert x(k) \rVert$ for all $k$ when the conditions apply. In other words, implementing the control strategy in Theorem~\ref{thm:control_datadriven} at full length will prevent the potential upcoming waves of the epidemic in the 2-norm sense of $x(k)$.
\end{remark}

Using the estimation results from Section~\ref{sec:inference}, we consider the following healing rate:
\begin{equation}\label{eq:control_scheme_healing_rate_inferred}
    \widehat{\widetilde{\gamma}}_i(k) =\widehat{s_i}(k) \sum_{j=1}^n \beta_{ij}(k) +\epsilon_i,  \;\ i \in [n], 
\end{equation}
where $\widehat{s_i}(k)$ is the estimated susceptible rate from~\eqref{eq:infer_SIR_st}.

\begin{corollary}\label{coro:control_inferred}
Consider the system in~\eqref{eq:dt_SIR} and assume that
\begin{enumerate}
    \item $0\leq h \sum_j \beta_{ij}(k) \leq 1$ , $\forall i \in [n]$ and $\forall k \in  \mathbb{ Z}_{\geq 0}$,
    \item $B(k)$ is symmetric and irreducible $\forall k \in  \mathbb{ Z}_{\geq 0}$.
    \item $\exists \epsilon_i$ small enough that $h\widehat{\widetilde{\gamma}}_i(k) <1$, $\widehat{s_i}(0)=1$
    $\forall i \in [n]$ and $\forall k \in  \mathbb{ Z}_{\geq 0}$.
\end{enumerate}
Then the algorithm~\eqref{eq:control_scheme_healing_rate_inferred} guarantees GES of the set of healthy states and $x(k)$ converges to $\mathbf{0}$ with at least an exponential rate.
\end{corollary}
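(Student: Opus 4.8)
The plan is to mirror the proof of Theorem~\ref{thm:control_datadriven}, substituting the inferred healing rate~\eqref{eq:control_scheme_healing_rate_inferred} for the true-state rate~\eqref{eq:control_scheme_healing_rate}, and then to control the gap between $\widehat{s_i}(k)$ and $s_i(k)$ using the estimation-error analysis of Section~\ref{sec:inference}. First I would substitute $\widehat{\widetilde{\gamma}}_i(k)$ into~\eqref{eq:dt_SIRsub2} to obtain the closed-loop dynamics $x(k+1)=\widehat{\widetilde{M}}(k)x(k)$, where
\begin{equation*}
    \widehat{\widetilde{M}}(k) = I + h[S(k)B(k) - (\widehat{S}(k)\,\text{diag}(B(k)\mathbf{1}) + \text{diag}(\epsilon_i))],
\end{equation*}
with $\widehat{S}(k)=\text{diag}(\widehat{s_i}(k))$. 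The crucial structural point is that the infection term retains the \emph{true} susceptible proportion $s_i(k)$, whereas the healing term now carries the \emph{estimated} proportion $\widehat{s_i}(k)$, so the exact cancellation enjoyed in Theorem~\ref{thm:control_datadriven} no longer holds.

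Next I would verify that $\widehat{\widetilde{M}}(k)$ is nonnegative: its off-diagonal entries $h s_i(k)\beta_{ij}(k)$ are nonnegative by Assumption~\ref{assume:one}, and its diagonal entry $\widehat{\widetilde{m}}_{ii}(k) = 1 - h\widehat{\widetilde{\gamma}}_i(k) + h s_i(k)\beta_{ii}(k)$ is positive because $h\widehat{\widetilde{\gamma}}_i(k)<1$ by condition~3. Computing the $i$-th row sum, the diagonal and off-diagonal infection terms recombine into $h s_i(k)\sum_j \beta_{ij}(k)$, yielding
\begin{equation*}
    \sum_{j} \widehat{\widetilde{m}}_{ij}(k) = 1 + h\big[(s_i(k) - \widehat{s_i}(k))\textstyle\sum_j \beta_{ij}(k) - \epsilon_i\big].
\end{equation*}
This is exactly where the true-state cancellation is replaced by the estimation gap $(s_i(k)-\widehat{s_i}(k))\sum_j\beta_{ij}(k)$.

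The main obstacle, and the step that makes the inference analysis pay off, is to show this gap is nonpositive. Under the assumption $\widehat{s_i}(0)=1$ in condition~3, Corollary~\ref{coro:est_error} gives $\widehat{s_i}(k)\geq s_i(k)$ for all $k\geq T_1$; for $k<T_1$ the inequality holds trivially since $\widehat{s_i}(k)=\widehat{s_i}(0)=1\geq s_i(k)$ by Lemma~\ref{lemma:one}. Hence $(s_i(k)-\widehat{s_i}(k))\sum_j\beta_{ij}(k)\leq 0$ for every $i$ and $k$, so each row sum is at most $1-h\epsilon_i\leq 1-h\min\{\epsilon_i\}$. Since $\widehat{\widetilde{M}}(k)$ is nonnegative, the same Gershgorin/max-row-sum bound used in Theorem~\ref{thm:control_datadriven} gives $\rho(\widehat{\widetilde{M}}(k))\leq 1-h\min\{\epsilon_i\}$.

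From here the argument closes exactly as in Theorem~\ref{thm:control_datadriven}: because $x(k+1)=\widehat{\widetilde{M}}(k)x(k)$ and $x(k)\geq\mathbf{0}$ by Lemma~\ref{lemma:one}, I obtain $\lVert x(k)\rVert\leq[1-h\min\{\epsilon_i\}]^k\lVert x(0)\rVert\leq e^{-kh\min\{\epsilon_i\}}\lVert x(0)\rVert$ via Bernoulli's inequality~\eqref{eq:bernoulli}, establishing GES of the set of healthy states with exponential rate at least $h\min\{\epsilon_i\}$. The conceptual takeaway is that initializing $\widehat{s_i}(0)=1$ forces the estimator to over-report the susceptible fraction, which in turn makes the applied healing rate conservatively large; the controller therefore errs on the safe side and convergence is never degraded relative to the true-state scheme.
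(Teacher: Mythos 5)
Your proposal is correct and follows essentially the same route as the paper's own proof: substitute the inferred healing rate to get $\widehat{\widetilde{M}}(k)$, use Corollary~\ref{coro:est_error} with $\widehat{s_i}(0)=1$ to get $\widehat{s_i}(k)\geq s_i(k)$ so the row sums are bounded by $1-h\min\{\epsilon_i\}$, and then close via the Gershgorin/row-sum bound and Bernoulli's inequality exactly as in Theorem~\ref{thm:control_datadriven}. You merely spell out a few details the paper leaves implicit (the explicit row-sum identity with the gap term $(s_i(k)-\widehat{s_i}(k))\sum_j\beta_{ij}(k)$ and the $k<T_1$ case), which is a welcome clarification but not a different argument.
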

\begin{proof}
Similar to the proof of Theorem~\ref{thm:control_datadriven}, we substitute \eqref{eq:control_scheme_healing_rate_inferred} into \eqref{eq:dt_SIR} and obtain the state transition matrix for $x_i(k)$:
\begin{equation}
    \widehat{\widetilde{M}}(k) = I+h[S(k)B(k) -(\widehat{S}(k)\text{diag}(B(k) \mathbf{1}_{n\times 1}) + \text{diag}(\epsilon_i))],
\end{equation}
where $\widehat{S}(k) =\text{diag}(\widehat{s_i}(k))$. 
For any $i
, j\in [n]$, $j\neq i$, the entries of the $i$-th row of $\widehat{\widetilde{M}}(k)$ satisfy:
\begin{align}
    \widehat{\widetilde{m}}_{ii}(k)+ \sum\limits_{j\neq i}^n\widehat{\widetilde{m}}_{ij}(k) 
    & \leq 1- h\min \{\epsilon_i\},
\end{align}
since from Corollary~\ref{coro:est_error} we know that when we assume that $\widehat{s_i}(0)=1$, 
we obtain $\widehat{s_i}(k)\geq s_i(k)$ for all $i\in [n]$. Consequently, by the Gershgorin circle theorem, we obtain that the spectral norm of $\widehat{\widetilde{M}}(k)$ is upper bounded by $ 1- h\min \{\epsilon_i\}$.

Therefore, by referring to~\eqref{eq:converge} and~\eqref{eq:bernoulli}, we acquire that 
the set of healthy states is GES.
\end{proof}

Theorem~\ref{thm:control_datadriven} (resp. Corollary~\ref{coro:control_inferred}) has proven that given the true (resp. estimated) susceptible state the distributed eradication strategy proposed eradicates the virus with at least an exponential rate. Therefore, question~\ref{itm:third_a} from Section~\ref{sec:model} has been addressed here.

In this section, we have presented two distributed eradication strategies based on the true and estimated system states. Both strategies ensure that the SIR epidemics converge to the sets of healthy states exponentially. 
We compare the two strategies with numerical simulations in Section~\ref{sec:simulation}, and study how a system will react if the eradication strategies are removed too early.

\section{Simulations} \label{sec:simulation}
\begin{figure}
\centering
\includegraphics[width=.18\textwidth]{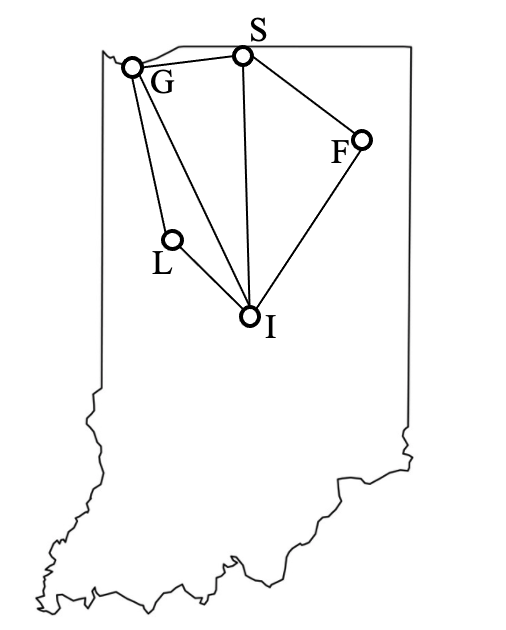}\includegraphics[width=.28\textwidth]{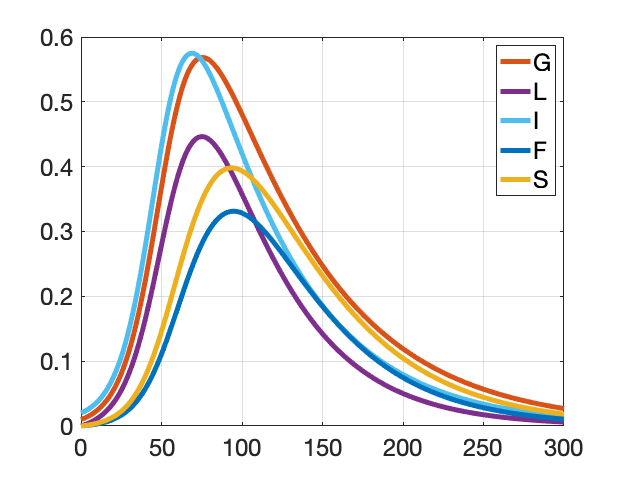}
\caption{Graph topology in the map of the state of Indiana~\cite{map2013} analyzed and the evolution of infected proportion in each city}
\label{fig:Graph_Topo_Indiana}
\end{figure}
In this section, we simulate a virus spreading over a static network
with 5 nodes in Fig~\ref{fig:Graph_Topo_Indiana} to illustrate our results. The nodes are modeled after the five metropolitan areas with a population over 150,000 in northern Indiana, U.S.: Gary (G), Lafayette (L), Indianapolis (I), Fort Wayne (F) and South Bend (S). Two nodes are neighbors if there is a major highway connecting 
them.
We set the initially infected proportion to be $0.02$ at node I and $0.01$ at node G and 0 elsewhere. The infection rates, healing rates, and the size of each subpopulation 
are static and presented in Table.~\ref{table:parameters}. 
The evolution of the infected proportion 
for each city
is shown in Fig.~\ref{fig:Graph_Topo_Indiana}.

\begin{table}[h!]
\centering
\begin{tabular}{|c | c c c c c|} 
 \hline
 $\beta_{ij}$    & G & L & I & F & S \\ [0.5ex] 
 \hline
 G   & 0.08      & 0.15          & 0.24   & 0  & 0.06  \\
 L & 0.15        & 0.12        & 0.13    & 0   & 0   \\
 I       & 0.24       & 0.13   & 0.25  & 0.05  & 0.04  \\
 F        & 0       & 0          & 0.05   & 0.11 & 0.15   \\
 S       &  0.06      & 0          & 0.04   & 0.14  & 0.09  \\ \hline\hline
 $\gamma_i$       & 0.075       & 0.115          & 0.085   & 0.125  & 0.1  \\
 $N_i$       & 500000       & 160000          & 900000   & 350000  & 300000  \\[1ex] 
 \hline
\end{tabular}
\caption{Network Parameters of Fig.~\ref{fig:Graph_Topo_Indiana}}
\label{table:parameters}
\end{table}



Considering the stochastic framework,
we simulate testing data using~\eqref{eq:model_daily} 
and~\eqref{eq:recovered_data}, with $p_i^x=0.2,$ 
$\forall i \in \{\text{G, L, I, F, S}\}$ from $T_1=6$ to $T_2=300$.
The number of daily and cumulative confirmed cases and removed (recovered) cases over time at node L are shown in Fig.~\ref{fig:Cases_Indiana_geo_p=0.3}. 
When $k\geq 80$, the proportion of infected individuals at node L begins to decrease in Fig.~\ref{fig:Graph_Topo_Indiana}, which leads to the decline of the number of active cases in Fig.~\ref{fig:Cases_Indiana_geo_p=0.3}. 

\begin{figure}
\centering
\begin{overpic}[width = 0.48\columnwidth]{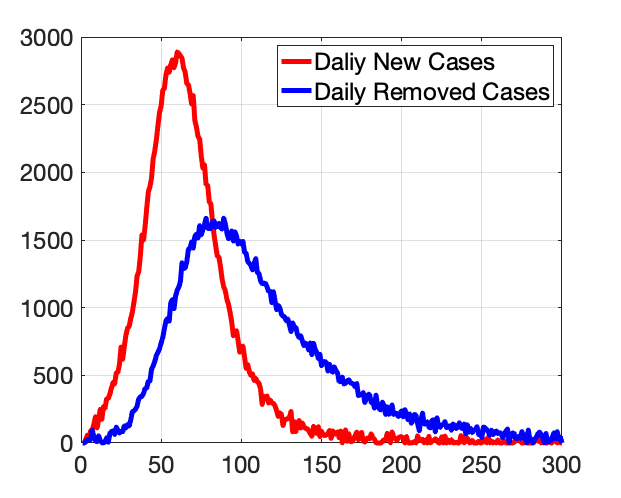}
     \put(-4,28){{\parbox{0.75\linewidth}\footnotesize \rotatebox{90}{\footnotesize$C_{\text{L}}$, $D_{\text{L}}$}
     }}
     \put(50,-3){\footnotesize{\parbox{0.75\linewidth}\footnotesize $k$
     }}\normalsize
   \end{overpic}
\begin{overpic}[width =  0.48\columnwidth]{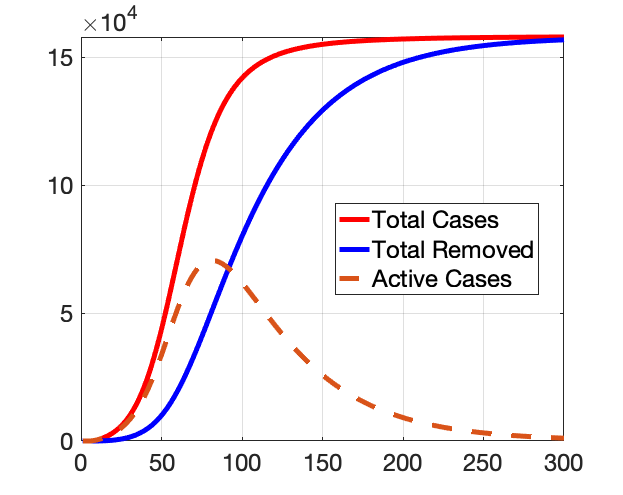}
     \put(-1,23){{\parbox{0.75\linewidth}\footnotesize \rotatebox{90}{\small$\mathbb{C}_{\text{L}}$, $\mathbb{D}_{\text{L}}$, $\mathbb{A}_{\text{L}}$}
     }}\normalsize
     \put(50,-3){\footnotesize 
    $k$
     }
   \end{overpic}
\caption{Simulated daily and cumulative number of cases for node L with $p_i^x=0.2$}
\label{fig:Cases_Indiana_geo_p=0.3}
\end{figure}

We now use the method proposed in Section~\ref{sec:inference} to estimate the susceptible proportion at node I. We assume that the initial condition of the recovered state is $\widehat{r}_{\text{I}}(0)=0$. Hence, the initial infected state is written as: $\widehat{x}_{\text{I}}(0) =1-\widehat{s}_{\text{I}}(0)$. In Fig.~\ref{fig:Estimation_error_sus}, 
we plot the absolute value of the estimation error of the susceptible state at $k=100$ versus the start testing time $T_1$ and initial condition assumed, $\widehat{s}_{\text{I}}(0)$. It can be seen in Fig.~\ref{fig:Estimation_error_sus} (top) that the estimation error increases linearly with the initial susceptible level assumed. When the initial condition is assumed correctly for node I, with a later start testing date, the estimation error at $k=100$ builds up from $0$ to $r_{\text{I}}(k)$ eventually. The increase in the estimation error with $T_1$ signifies the importance of an early testing during an outbreak: with appropriate initial conditions assumed, we should initiate testing as quickly as possible to improve the accuracy of the state estimation. Meanwhile, Fig.~\ref{fig:Estimation_error_sus} (bottom) indicates that with a later start testing date, we must assume a lower initial susceptible level accordingly to achieve accurate estimation. The intuition behind this finding is that 
since, 
by Definition \ref{def:initial}, $\widehat{s_i}(k)=\widehat{s_i}(0)$, for all $k< T_1$, the lower initial condition can compensate for missed tests from $k\in [0,T_1-1]$, captured by the last term in \eqref{eq:infer_error_bound}. However, guessing $\widehat{s_i}(0)$ correctly, namely $\widehat{s_i}(0)= s_i(0)+\sum_{l=1}^{T_1-1}\Delta s_i(l)$, for $T_1>0$ is quite difficult. 
Additionally, if we assume that $\widehat{s_i}(0)=1$, the estimated $\widehat{s_i}(k)$ is always larger than the true susceptible state in Fig.~\ref{fig:Estimation_error_sus}. The overestimation of susceptible level encourages us to design a stronger strategy to eradicate the virus, as will be seen in the subsequent simulations.

\begin{figure}
\centering
\begin{overpic}[width = 0.9 \columnwidth]{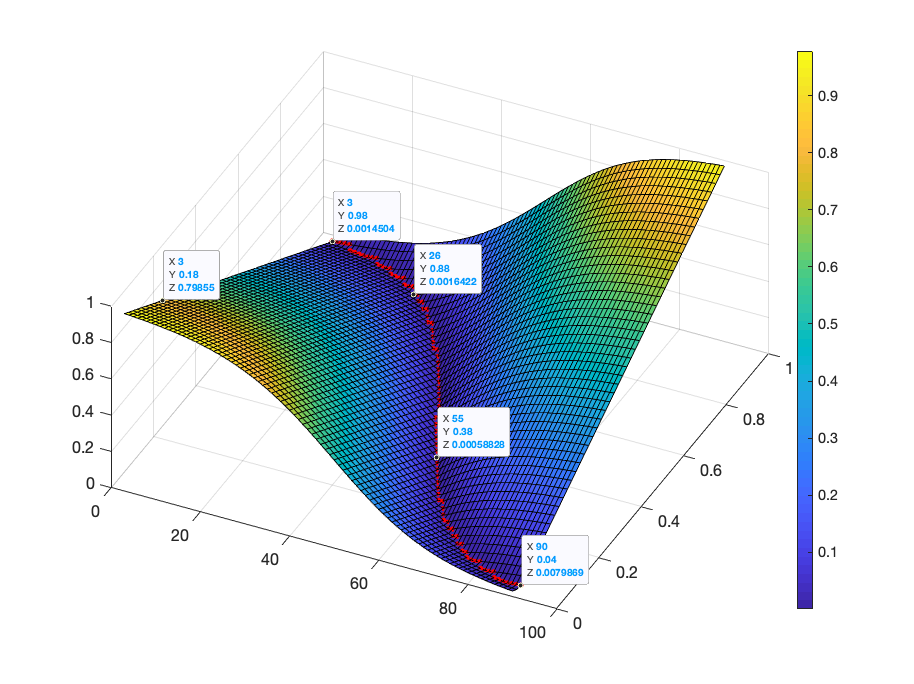}

    \put(2,19){{\parbox{0.75\linewidth}\tiny \rotatebox{90}{\small$| \widehat{s}_{\text{I}}(k) -s_{\text{I}}(k) | $}
     }}\normalsize
     \put(24,10){{\parbox{0.75\linewidth}\small 
     $T_1$ }
     }\normalsize
    \put(72,12){{\parbox{0.75\linewidth}\small 
     $\widehat{s}_{\text{I}}(0)$ }
     }\normalsize
   \end{overpic}
   \begin{overpic}[width = 0.91 \columnwidth]{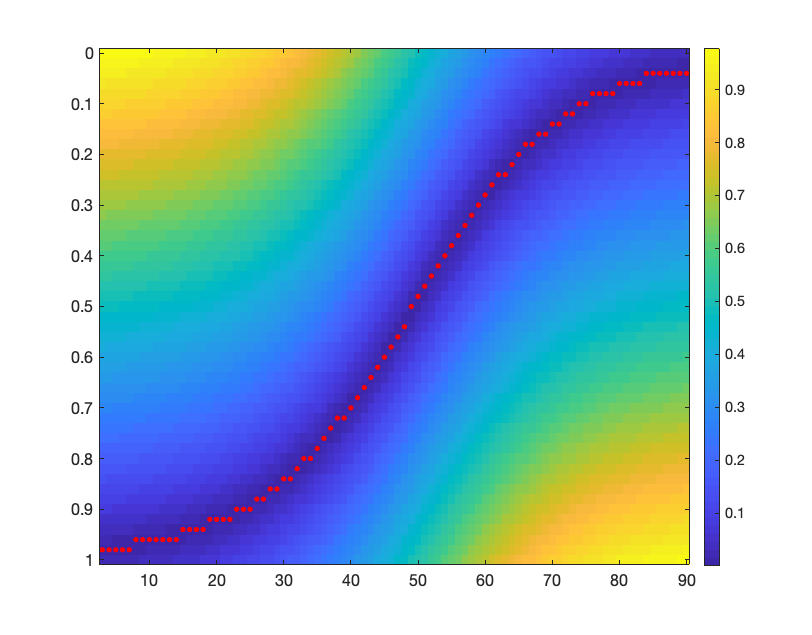}

    \put(2,34){{\parbox{0.75\linewidth}\small \rotatebox{90}{$\widehat{s}_{\text{I}}(0)$}
     }}\normalsize
    \put(50,0){{\parbox{0.75\linewidth}\small 
     $T_1$ }
     }\normalsize
   \end{overpic}
\caption{The absolute value of susceptible state estimation error at $k=100$ with respect to start testing date and the initial susceptible level assumed at node I, where the red points represent the estimation error: $| \widehat{s}_{\text{I}}(k) -s_{\text{I}}(k)| <0.01$. Both of the plots are illustrations of Prop.~\ref{prop:estimation_error}}
\label{fig:Estimation_error_sus}
\end{figure}

We simulate three scenarios over the network in Fig.~\ref{fig:Graph_Topo_Indiana} with the parameters of Table.~\ref{table:parameters}: no control, the distributed eradication strategy in~\eqref{eq:control_scheme_healing_rate}, and the  
distributed eradication strategy utilizing estimated states in~\eqref{eq:control_scheme_healing_rate_inferred}. The inferred states were produced by the algorithm in Section~\ref{sec:inference} with $p_i^x=0.5$ $\forall i \in \{\text{G, L, I, F, S}\}$.
The average states for each scenario are plotted in Fig.~\ref{fig:Control_strategies}. It can be seen that both eradication strategies are able to eliminate the virus at a much higher speed than with no control. Furthermore, when $k\geq 200$, the healthy states with the two eradication strategies applied achieve higher susceptible fractions than the healthy state without control.  
We can interpret the higher susceptible proportion as fewer individuals in the network becoming sick during the entire outbreak. 
The control algorithm from~\eqref{eq:control_scheme_healing_rate_inferred} converges to a healthy state faster than
the algorithm in~\eqref{eq:control_scheme_healing_rate}, and
both eradication strategies prevent resurgences of the virus over the network.
In Fig.~\ref{fig:Lift_Control_strategies}, we remove both eradication strategies when $k=50$ and $k=100$ and do not reinstate them. It can be seen that both of the infection curves rise up when $k\geq 50$ (resp. $k\geq 100$), and reach peaks before they slowly die out. Fig.~\ref{fig:Lift_Control_strategies} can be interpreted as removing the allocation of resources and healthcare workers from a subpopulation too early during a pandemic, resulting in the increase in infection level and a potential outbreak. In Fig.~\ref{fig:Control_strategies_inter}, we only enforce our eradication strategies within time interval: $k\in [20,50]$ and $k \in [20,150]$, respectively. We can see that although the control strategies reduce the infection level significantly, a resurgence of the outbreak occurs immediately upon the removal of the eradication strategies. Hence, policy makers are suggested to enforce the eradication strategies during the entire outbreak to avoid the upcoming wave of epidemic.

\begin{figure}
\begin{overpic}[width = 0.49 \columnwidth]{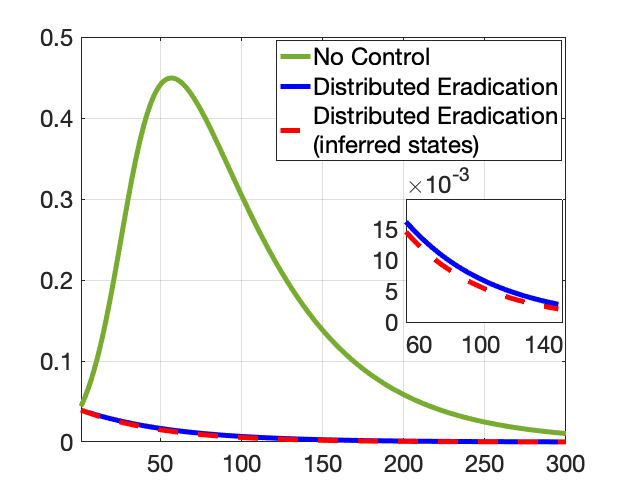}

    \put(-3,23){{\parbox{0.75\linewidth}\small \rotatebox{90}{\small$\frac{1}{n}\sum_i^n x_i(k)$}
     }}\normalsize
     \put(50,-3){{\parbox{0.75\linewidth}\small $k$
     }}\normalsize
   \end{overpic}
   \begin{overpic}[width = 0.49 \columnwidth]{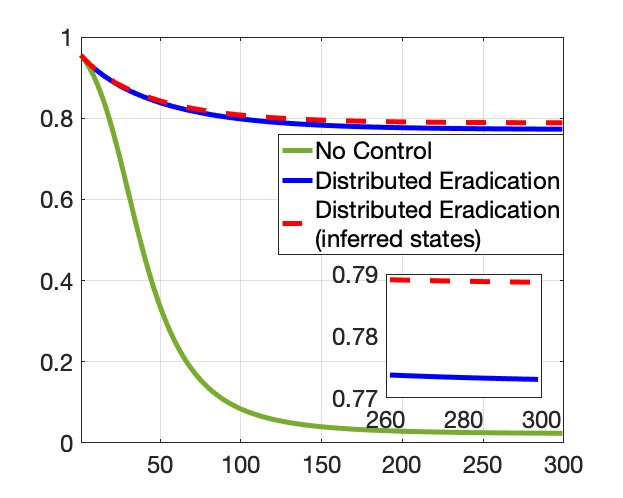}


    \put(-3,23){{\parbox{0.75\linewidth}\small \rotatebox{90}{\small$\frac{1}{n}\sum_i^n s_i(k)$}
     }}\normalsize
     \put(50,-3){{\parbox{0.75\linewidth}\small $k$
     }}\normalsize
   \end{overpic}
\caption{Average system states over time}
\label{fig:Control_strategies}
\end{figure}

\begin{figure}
\begin{overpic}[width = 0.49 \columnwidth]{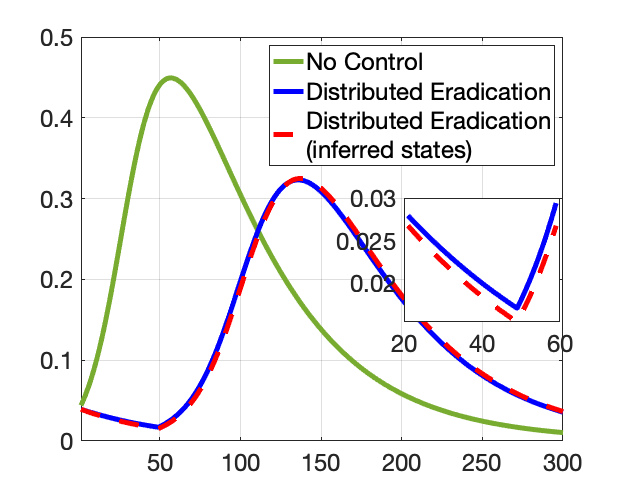}

    \put(-4,20){{\parbox{0.75\linewidth}\small \rotatebox{90}{\small$\frac{1}{n}\sum_i^n x_i(k)$}
     }}\normalsize
     \put(50,-1.5){{\parbox{0.75\linewidth}\small $k$
     }}\normalsize
   \end{overpic}
   \begin{overpic}[width = 0.49 \columnwidth]{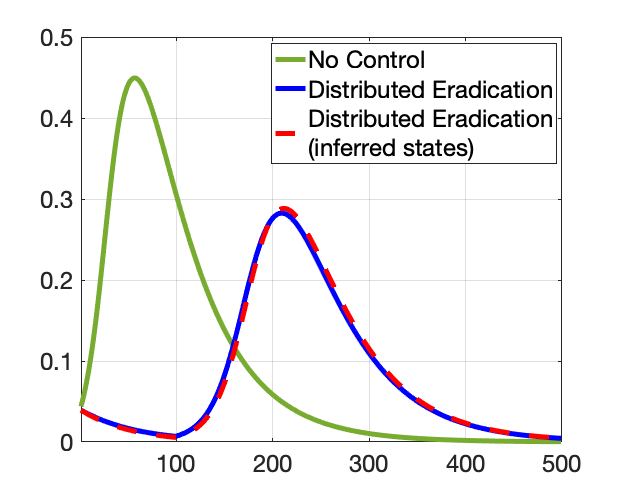}

    \put(-4,20){{\parbox{0.75\linewidth}\small \rotatebox{90}{\small$\frac{1}{n}\sum_i^n x_i(k)$}
     }}\normalsize
     \put(50,-1.5){{\parbox{0.75\linewidth}\small $k$
     }}\normalsize
   \end{overpic}
\caption{Average infection proportion of the virus over time with the eradication strategies enforced at $k\in [0,50]$ (left) and at $k \in [0,100]$ (right)}
\label{fig:Lift_Control_strategies}
\end{figure}


\begin{figure}
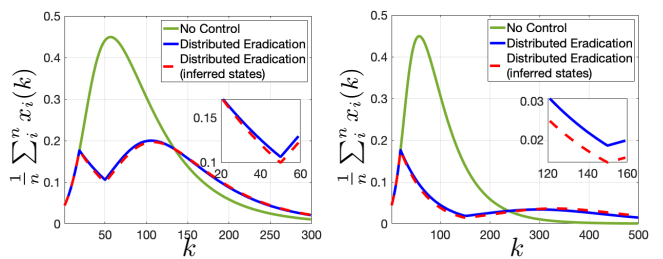

\begin{overpic}[width = 0.49 \columnwidth]{figs/Control_Strategy_inter_20_50.png}

    \put(-4,20){{\parbox{0.75\linewidth}\small \rotatebox{90}{\small$\frac{1}{n}\sum_i^n x_i(k)$}
     }}\normalsize
     \put(50,-1.5){{\parbox{0.75\linewidth}\small $k$
     }}\normalsize
   \end{overpic}
   \begin{overpic}[width = 0.49 \columnwidth]{figs/Control_Strategy_inter_20_150.png}

    \put(-4,20){{\parbox{0.75\linewidth}\small \rotatebox{90}{\small$\frac{1}{n}\sum_i^n x_i(k)$}
     }}\normalsize
     \put(50,-1.5){{\parbox{0.75\linewidth}\small $k$
     }}\normalsize
   \end{overpic}
\caption{Average infection proportion of the virus over time with both of the eradication strategies imposed at $k\in [20,50]$ (left) and at $k\in [20,150]$ (right)}
\label{fig:Control_strategies_inter}
\end{figure}


\section{Conclusion}\label{sec:conclusion}
This paper studied the stability, inference and control of discrete time, time-varying SIR epidemics over networks. We established the sufficient condition for GES of the set of healthy states. In addition, we proposed a stochastic framework for estimating the 
underlying epidemic states
from collected testing data. We provided analytic expressions for the error of the estimation algorithm. 
We also proposed two distributed control strategies that are able to eradicate the virus in at least exponential time. 
The control strategies provide 
insights for decision makers on how to eliminate an ongoing outbreak.

In future work, we plan to study the stability and control of models with more states than SIR such as SEIRS (susceptible-exposed-infected-recovered-susceptible) and SAIR (susceptible-asymptomatic-infected-recovered) as they can possibly capture the characteristics of COVID-19 better. In our stochastic testing framework, we did not consider the existence of 
inaccurate testing kits,
which appear frequently and cause confusion for 
policy 
makers. Hence, we plan to include false positive/negative test results into our testing and estimation model and investigate the new model's estimation accuracy in the future. Furthermore, we aim to apply our model on the real data to identify the system parameters. 




\section*{Acknowledgment}

The authors would like to thank Ashish Hota (IIT Kharagpur) and Baike She (Purdue University) for useful conversations related to 
Section~\ref{sec:inference}.




%




\normalem
\bibliographystyle{IEEEtran}
\bibliography{reference}

  \appendix
\subsection{Proof of Theorem.~\ref{thm:GES}}

\begin{proof}
By Assumptions \ref{assume:one} and \ref{assume:two}, $M(k)$ is nonnegative. Therefore, from Lemma~\ref{lemma:diagonal}, for all $k \in \mathbb{Z}_{\geq 0}$, there exists a diagonal matrix $Q(k)\succ 0$ such that $M^\top(k) Q(k+1) M(k) - Q(k)\prec 0$. 

Consider the following Lyapunov function $V(k,x) = x^\top Q(k)x$. Since for all $k \in \mathbb{Z}_{\geq 0}$, $Q(k)$ is diagonal and 
positive definite, it can be written that
$x^\top Q(k)x>0$, for all $x\neq \mathbf{0}$. 
Therefore, $V(k,x)>0$ for all $k\in \mathbb{Z}_{\geq 0}$, $x\neq \mathbf{0}$. 
Additionally,
all the eigenvalues of $Q(k)$ are real and positive. By applying Rayleigh-Ritz Quotient Theorem~\cite{horn2012matrix}, we obtain
\begin{equation}
  \lambda_{\text{min}}(Q(k))I \leq Q(k) \leq  \lambda_{\text{max}}(Q(k))I,    
\end{equation}
which implies
\begin{equation}\label{eq:sigma2}
    \sigma_1\| x\|^2 \leq V(k,x) \leq  \sigma_2\| x\|^2,
\end{equation}
where 
$\sigma_1 = \min_{k\in \mathbb{Z}_{\geq 0}}\lambda_{\text{min}}(Q(k))$ and $\sigma_2= \max_{k\in \mathbb{Z}_{\geq 0}} \lambda_{\text{max}}(Q(k))$, with $\sigma_1, \sigma_2>0$,
for all $k \in \mathbb{Z}_{\geq 0}$.

Now we turn to $\Delta V(k,x)$. For $x\neq 0$ and for each $k\in \mathbb{ Z}_{\geq 0}$, using \eqref{eq:SIR_dynamic} and \eqref{eq:M}-\eqref{eq:Mhat}, we can write 
\begin{align}
    &\Delta V(k,x) \nonumber \\ &= 
    x^\top \hat{M}(k)^\top Q(k+1)\hat{M}(k)x- x^\top Q(k) x \nonumber \\ 
    &= 
    x^\top [M(k)^\top  Q(k+1)M(k)- Q(k)]x \nonumber \\ & \ \  \ \ -2h x^\top B^\top(k) (I-S(k))Q(k+1)M(k)x\nonumber \\ 
    & \ \  \ \  +h^2 x^\top B^\top(k) (I-S(k))Q(k+1)(I-S(k))B(k)x.\label{eq:deltaV}
\end{align}
Note that the second and third term of \eqref{eq:deltaV} can be reorganized as 
\begin{align}
    &x^\top [-2h B^\top(k) (I-S(k))Q(k+1)M(k)\nonumber \\
    &+h^2 B^\top(k) (I-S(k))Q(k+1)(I-S(k))B(k) ] x \nonumber \\
    &= x^\top \{h B^\top(k) (I-S(k))Q(k+1)\nonumber \\
    & \ \  \ \ [-2M(k)+ h(I-S(k)) B(k)]\}x\nonumber \\
    &= x^\top \{h B^\top(k) (I-S(k))Q(k+1)\nonumber \\ \label{eq:second_third_term}
    & \ \  \ \ [-2(I -h\Gamma(k))-h(I+S(k))B(k) ]\}x\leq 0, 
\end{align}
where the last equality follows from \eqref{eq:M}, and the inequality follows from Assumptions \ref{assume:one} and \ref{assume:two} and Lemma~\ref{lemma:one}. 
Thus, by applying~\eqref{eq:second_third_term} into~\eqref{eq:deltaV}, we obtain that
\begin{equation}
    \Delta V(k,x) \leq x^\top [M(k)^\top  Q(k+1)M(k)- Q(k)]x.
\end{equation}
From Lemma~\ref{lemma:diagonal}, we know that $[M(k)^\top  Q(k+1)M(k)- Q(k)]$ is negative definite, and $[Q(k)-M(k)^\top  Q(k+1)M(k)]$ is positive definite. Therefore, we obtain $\lambda_{\text{max}}[M(k)^\top  Q(k+1)M(k)- Q(k)] =-\lambda_{\text{min}}[Q(k)-M(k)^\top  Q(k+1)M(k)]$. By applying Rayleigh-Ritz Quotient Theorem we can write 
\begin{equation}\label{eq:sigma3}
    \Delta V(k,x) \leq - \sigma_3 \| x\|^2,
\end{equation}
where $\sigma_3 = \max_{k\in \mathbb{Z}_{\geq 0}} \lambda_{\text{min}}[Q(k)-M(k)^\top  Q(k+1)M(k)]$, with $\sigma_3>0$ for all $k \in \mathbb{Z}_{\geq 0}$.

Therefore, from~\eqref{eq:sigma2},~\eqref{eq:sigma3} and  Lemma~\ref{lemma:GES} the set of healthy states of \eqref{eq:dt_SIR} is GES, proving the theorem.
\end{proof}

\subsection{Proof of Corollary~\ref{coro:rate_exp}}
\begin{proof}
From Lemma~\ref{lemma:rate_GES}, \eqref{eq:sigma2}, and \eqref{eq:sigma3}, the rate of convergence is upper bounded by $\sqrt{1-\frac{\sigma_3}{\sigma_2}}$.  
The next step is to show that the rate is well defined, which is $\sqrt{1-\frac{\sigma_3}{\sigma_2}}  \in [0,1)$. Since $\sigma_2>0$ and $\sigma_3>0$, we only need to prove that $\sigma_2 \geq \sigma_3$.

Note that for all $k \in \mathbb{Z}_{\geq 0}$, $Q(k)$ and $M(k)^\top  Q(k+1)M(k)$ are both symmetric. Therefore, by applying Weyl's inequalities from~\cite{horn2012matrix} to $[Q(k)-M(k)^\top  Q(k+1)M(k)]$, we obtain that, for all $k \in \mathbb{Z}_{\geq 0}$,
\begin{align}\label{eq:Weyl}
    & \lambda_i[Q(k)- M(k)^\top Q(k+1) M(k) ] \nonumber \\
    & \leq \lambda_i(Q(k))-\lambda_i [M(k)^\top Q(k+1) M(k) ].
\end{align}
We compare the LHS of~\eqref{eq:Weyl} with $\sigma_3$ and the RHS of~\eqref{eq:Weyl} with $\sigma_2$ to yield
\begin{align}
     \sigma_3 & \leq \lambda_i[Q(k)- M(k)^\top Q(k+1) M(k) ], \\
     \sigma_2 &\geq \lambda_i(Q(k))-\lambda_i [M(k)^\top Q(k+1) M(k)],
     \label{eq:RHS_Weyl}
\end{align}
where~\eqref{eq:RHS_Weyl} holds because $M(k)^\top Q(k+1) M(k)$ is positive semidefinite for all $k \in \mathbb{Z}_{\geq 0}$. 
Hence, $\sigma_2\geq \sigma_3$ and the rate of convergence is well defined.
\end{proof}
\end{document}